\newcommand{\lca}{\ensuremath{\operatorname{lca}}}
\newtheorem{theorem}{Theorem}
\newtheorem{lemma}[theorem]{Lemma}
\newtheorem{definition}[theorem]{Definition}
\newtheorem{corollary}[theorem]{Corollary}
\begin{document}
\title{From event labeled gene trees to species trees}


\author[3,5]{Maribel Hernandez-Rosales}
\author[1,2]{Marc Hellmuth}
\author[8]{Nicolas Wieseke}
\author[4]{Katharina T.~Huber}
\author[4]{Vincent Moulton}
\author[3,5,6,7]{Peter F.\ Stadler}

\affil[1]{\footnotesize Dpt.\ of Mathematics and Computer Science, University of Greifswald, Walther-
  Rathenau-Strasse 47, D-17487 Greifswald, Germany }
\affil[2]{Saarland University, Center for Bioinformatics, Building E 2.1, P.O.\ Box 151150, D-66041 Saarbr{\"u}cken, Germany }
\affil[3]{Bioinformatics Group, Department of Computer Science; and
		    Interdisciplinary Center of Bioinformatics, University of Leipzig, \\
			 H{\"a}rtelstra{\ss}e 16-18, D-04107 Leipzig, Germany}
\affil[4]{School of Computing Sciences, University of East Anglia,
		    Norwich, NR4 7TJ, UK.\ }
\affil[5]{Max-Planck-Institute for Mathematics in the Sciences, \\
  Inselstra{\ss}e 22, D-04103 Leipzig, Germany}
\affil[6]{Inst.\ f.\ Theoretical Chemistry, University of Vienna, \\
  W{\"a}hringerstra{\ss}e 17, A-1090 Wien, Austria}
\affil[7]{Santa Fe Institute, 1399 Hyde Park Rd., Santa Fe, USA} 
\affil[8]{Parallel Computing and Complex Systems Group \\
  Department of Computer Science, 
  Leipzig University \\
  Augustusplatz 10, 04109, Leipzig, Germany}
\date{}
\normalsize

\maketitle

\abstract{ \noindent
 \paragraph*{Background:} 
  Tree reconciliation problems have long been studied in phylogenetics.  A
  particular variant of the reconciliation problem for a gene tree $T$ and
  a species tree $S$ assumes that for each interior vertex $x$ of $T$ it is
  known whether $x$ represents a speciation or a duplication. This problem
  appears in the context of analyzing orthology data.
  \paragraph*{Results:} 
  We show that $S$ is a species tree for $T$ if and only if $S$ displays
  all rooted triples of $T$ that have three distinct species as their
  leaves and are rooted in a speciation vertex. A valid reconciliation map
  can then be found in polynomial time. Simulated data shows that the
  event-labeled gene trees convey a large amount of information on
  underlying species trees, even for a large percentage of losses.
  \paragraph*{Conclusions:} 
  The knowledge of event labels in a gene tree strongly constrains the
  possible species tree and, for a given species tree, also the possible
  reconciliation maps. Nevertheless, many degrees of freedom remain in the
  space of feasible solutions. In order to disambiguate the alternative
  solutions additional external constraints as well as optimization criteria
  could be employed.
}

\sloppy

\section*{Background}

The reconstruction of the evolutionary history of a gene family is
necessarily based on at least three interrelated types of information. The
true phylogeny of the investigated species is required as a scaffold with
which the associated gene tree must be reconcilable.
Orthology or paralogy of genes found in different species determines
whether an internal vertex in the gene tree corresponds to a duplication or
a speciation event. Speciation events, in turn, are reflected in the
species tree.

The reconciliation of gene and species trees is a widely studied
problem \cite{Guigo1996,Page1997,Arvestad2003,Bonizzoni2005,Gorecki2006,%
Hahn2007,Bansal2008,Chauve2008,Burleigh2009,Larget2010}. In most
practical applications, however, neither the gene tree nor the species tree
can be determined unambiguously.

Although orthology information is often derived from the reconciliation of
a gene tree with a species tree (cf.\ e.g.\ \texttt{TreeFam} \cite{Li:06},
\texttt{PhyOP} \cite{Goodstadt:06}, \texttt{PHOG} \cite{Datta:09},
\texttt{EnsemblCompara GeneTrees} \cite{Hubbard:07}, and \texttt{MetaPhOrs}
\cite{Pryszcz:11}), recent benchmarks studies \cite{Altenhoff:09} have
shown that orthology can also be inferred at similar levels of accuracy
without the need to construct trees by means of clustering-based approaches
such as \texttt{OrthoMCL} \cite{Li:03}, the algorithms underlying the
\texttt{COG} database \cite{Tatusov:00,Wheeler:08}, \texttt{InParanoid}
\cite{Berglund:08}, or \texttt{ProteinOrtho} \cite{Lechner:11a}.  In
\cite{Hellmuth:12d} we have therefore addressed the question: How much
information about the gene tree, the species tree, and their reconciliation
is already contained in the orthology relation between genes?

According to Fitch's definition \cite{Fitch2000}, two genes are
(co-)orthologous if their last common ancestor in the gene tree represents
a speciation event.  Otherwise, i.e., when their last common ancestor is a
duplication event, they are paralogs. The orthology relation on a set of
genes is therefore determined by the gene tree $T$ and an ``event
labeling'' that identifies each interior vertex of $T$ as either a
duplication or a speciation event. (We disregard here additional types of
events such as horizontal transfer and refer to \cite{Hellmuth:12d} for
details on how such extensions might be incorporated into the mathematical
framework.)  One of the main results of \cite{Hellmuth:12d}, which relies
on the theory of symbolic ultrametrics developed in \cite{Boeckner:98}, is
the following: A relation on a set of genes is an orthology relation (i.e.,
it derives from some event-labeled gene tree) if and only if it is a
cograph (for several equivalent characterizations of cographs see
\cite{Brandstaedt:99}).  Note that the cograph does not contain the full
information on the event-labeled gene tree. Instead the cograph is
equivalent to the gene tree's homomorphic image obtained by collapsing
adjacent events of the same type \cite{Hellmuth:12d}. The orthology
relation thus places strong and easily interpretable constraints on the
gene tree.

This observation suggests that a viable approach to reconstructing
histories of large gene families may start from an empirically determined
orthology relation, which can be directly adjusted to conform to the
requirement of being a cograph. The result is then equivalent to an
(usually incompletely resolved) event-labeled gene tree, which might be
refined or used as constraint in the inference of a fully resolved gene
tree.  In this contribution we are concerned with the next conceptual step:
the derivation of a species tree from an event-labeled gene tree. As we
shall see below, this problem is much simpler than the full tree
reconciliation problem.  Technically, we will approach this problem by
reducing the reconciliation map from gene tree to species tree to rooted
triples of genes residing in three distinct species. This is related to an
approach that was developed in \cite{Chauve:09} for addressing the full
tree reconciliation problem.

\section*{Methods}

\subsection*{Definitions and Notation}

\subsubsection*{Phylogenetic Trees} 

A \emph{phylogenetic tree $T$ (on $L$)} is a rooted tree $T=(V,E)$, with
leaf set $L\subseteq V$, set of directed edges $E$, and set of interior
vertices $V^0=V\setminus L$ that does not contain any vertices with in- and
outdegree one and whose root $\rho_T\in V$ has indegree zero. In order to
avoid uninteresting trivial cases, we assume that $|L|\ge 3$. The ancestor
relation $\preceq_T$ on $V$ is the partial order defined, for all $x,y\in V$,
by $x \preceq_T y$ whenever $y$ lies on the path from $x$ to the root. If
there is no danger of ambiguity, we will write $x\preceq y$ rather than $x
\preceq_T y$.  Furthermore, we write $x \prec y$ to mean $x \preceq y$ and
$x\ne y$. For $x\in V$, we write $L(x):=\{ y\in L| y\preceq x\}$ for the
set of leaves in the subtree $T(x)$ of $T$ rooted in $x$. Thus,
$L(\rho_T)=L$ and $T(\rho_T)=T$. For $x,y\in V$ such that $x$ and $y$ are
joined by an edge $e\in E$ we write $e=[y,x]$ if $x\prec y$. Two
phylogenetic trees $T=(V,E)$ and $T'=(V',E')$ on $L$ are said to be
\emph{equivalent} if there exists a bijection from $V$ to $V'$ that is the
identity on $L$, maps $\rho_T$ to $\rho_{T'}$, and extends to a graph
isomorphism between $T$ and $T'$.  A \emph{refinement} of a phylogenetic
tree $T$ on $L$ is a phylogenetic tree $T'$ on $L$ such that $T$ can be
obtained from $T'$ by collapsing edges (see e.g.\ \cite{Semple:book}).

Suppose for the remainder of this section that $T=(V,E)$ is a phylogenetic
tree on $L$ with root $\rho_T$. For a non-empty subset of leaves
$A\subseteq L$, we define $\lca_T(A)$, or the \emph{most recent common
  ancestor of $A$}, to be the unique vertex in $T$ that is the greatest
lower bound of $A$ under the partial order $\preceq_T$.  In case $A=\{x,y
\}$, we put $\lca_T(x,y):=\lca_T(\{x,y\})$ and if $A=\{x,y,z \}$, we put
$\lca_T(x,y,z):=\lca_T(\{x,y,z\})$.  For later reference, we have, for all
$x\in V$, that $x=\lca_T(L(x))$.  Let $L'\subseteq L$ be a subset of
$|L'|\ge 2$ leaves of $T$. We denote by $T(L')=T(\lca_T(L'))$ the (rooted)
subtree of $T$ with root $\lca_T(L')$. Note that $T(L')$ may have leaves
that are not contained in $L'$. The \emph{restriction} $T|_{L'}$ of $T$ to
$L'$ is the phylogenetic tree with leaf set $L'$ obtained from $T$ by first 
forming the minimal spanning tree in $T$ with leaf set $L'$ and then
by suppressing all vertices of degree two with the exception of $\rho_T$ if
$\rho_T$ is a vertex of that tree. A phylogenetic tree $T'$ on some subset
$L'\subseteq L$ is said to be \emph{displayed} by $T$ (or equivalently that
$T$ \emph{displays} $T'$) if $T'$ is equivalent with $T|_{L'}$. A set
$\mathcal{T}$ of phylogenetic trees $T$ each with leaf set $L_T$ is called
\emph{consistent} if $\mathcal{T}=\emptyset$ or there is a phylogenetic
tree $T$ on $L=\bigcup_{T\in \mathcal T} L_T$ that \emph{displays}
$\mathcal T$, that is, $T$ displays every tree contained in
$\mathcal{T}$. Note that a consistent set of phylogenetic trees is
sometimes also called compatible (see e.g.\ \cite{Semple:book}).

It will be convenient for our discussion below to extend the ancestor
relation $\preceq_T$ on $V$ to the union of the edge and vertex sets of
$T$. More precisely, for the directed edge $e=[u,v]\in E$ we put $x
\prec_T e$ if and onfly if $x\preceq_T v$ and $e \prec_T x$ if and 
only if $u\preceq_E x$. For edges $e=[u,v]$ and $f=[a,b]$ in T we put 
$e\preceq f$ if and only if $v \preceq b$.

\subsubsection*{Rooted triples} 

Rooted triples are phylogenetic trees on three leaves with precisely two
interior vertices. Sometimes also called rooted triplets \cite{Dress:book} 
they constitute an important concept in the context of supertree reconstruction
\cite{Semple:book,Bininda:book} and will also play a major role here.
Suppose $L=\{x,y,z\}$.  Then we denote by $((x,y),z)$ the triple $r$ with
leaf set $L$ for which the path from $x$ to $y$ does not intersect the path
from $z$ to the root $\rho_r$ and thus, having $\lca_r(x,y)\prec \lca_r(x,y,z)$.
For $T$ a phylogenetic tree, we denote by $\mathfrak{R}(T)$ the set of all
triples that are displayed by $T$.  

Clearly, a set $\mathcal{R}$ of triples is
\emph{consistent} if there is a phylogenetic tree $T$ on 
$X=\bigcup_{r\in \mathcal{R}} L(\rho_r)$ such that 
$\mathcal{R} \subseteq \mathcal{R}(T)$. 
Not all sets of triples are consistent of course. Given a triple set
$\mathcal{R}$ there is a polynomial-time algorithm, referred to in
\cite{Semple:book} as \texttt{BUILD}, that either constructs a phylogenetic
tree $T$ that displays $\mathcal{R}$ or that recognizes that $\mathcal{R}$
is \emph{inconsistent}, that is, not consistent \cite{Aho:81}.  Various
practical implementations have been described starting with \cite{Aho:81},
improved variants are discussed in \cite{Henzinger:99,Jansson:05}.

The problem of determining a maximum consistent subset $\mathcal{R}'$ of an
inconsistent set of triples, on the other hand is NP-hard and also
APX-hard, see \cite{Byrka:10a,vanIersel:09} and the references therein.  We
refer to \cite{Byrka:10} for an overview on the available practical
approaches and further theoretical results.

The \texttt{BUILD} algorithm, furthermore, does not necessarily generate
for a given triple set $\mathcal R$ a minimal phylogenetic tree $T$ that
displays $\mathcal{R} $, i.e., $T$ may resolve multifurcations in an
arbitrary way that is not implied by any of the triples in
$\mathcal{R}$. However, the tree generated by \texttt{BUILD} is
minor-minimal, i.e., if $T'$ is obtained from $T$ by contracting an edge,
$T'$ does not display $\mathcal{R}$ anymore. The trees produced by
\texttt{BUILD} do not necessarily have the minimum number of internal
vertices. Thus, depending on $\mathcal{R}$, not all trees consistent with
$\mathcal{R}$ can be obtained from \texttt{BUILD}.  Semple \cite{Semple:03}
gives an algorithm that produces all minor-minimal trees consistent with
$\mathcal{R}$. It requires only polynomial time for each of the possibly
exponentially many minor-minimal trees.  The problem of constructing a tree
consistent with $\mathcal{R}$ and minimizing the number of interior
vertices in NP-hard and hard to approximate \cite{Jansson:12}.

\subsection*{Event Labeling, Species Labeling, and Reconciliation Map}

A gene tree $T$ arises through a series of events along a species tree
$S$. We consider both $T$ and $S$ as phylogenetic trees with leaf sets $L$
(the set of genes) and $B$ (the set of species), respectively. We assume
that $|L|\ge 3$ and $|B|\ge 1$. We consider only gene duplications and gene
losses, which take place between speciation events, i.e., along the edges
of $S$. Speciation events are modeled by transmitting the gene content of
an ancestral lineage to each of its daughter lineages.

\begin{figure}[tbp]
\begin{center}
    \includegraphics[width=0.9\textwidth]{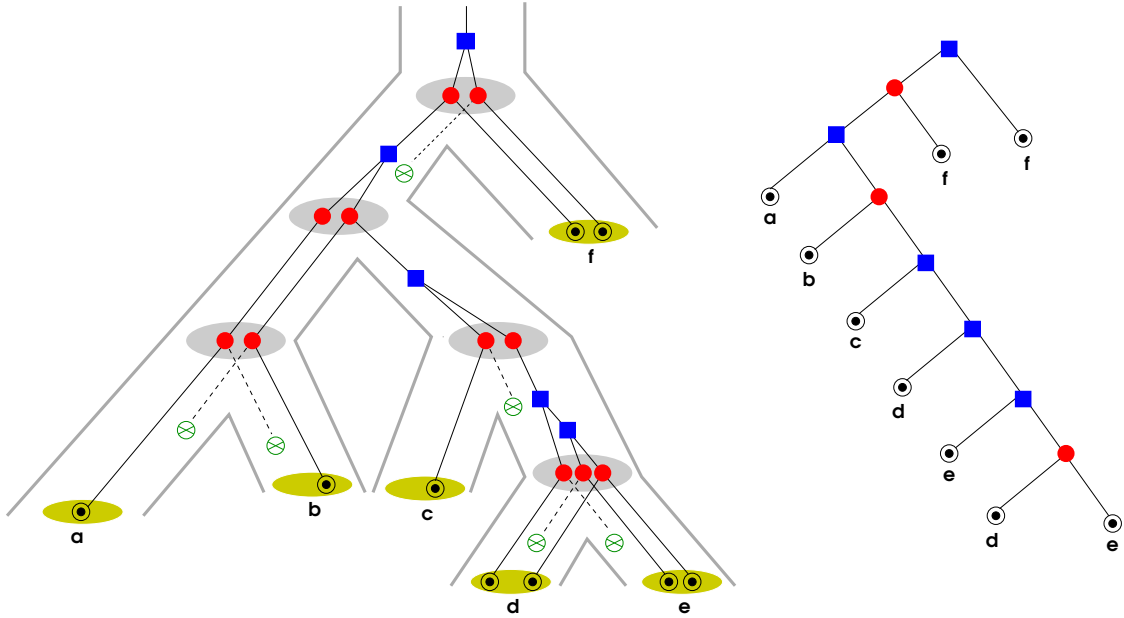}
\end{center}
\caption{ \textbf{Left:} Example of an evolutionary scenario showing
  the evolution of a gene family.  The corresponding true gene tree
  $\hat{T}$ appears embedded in the true species tree $\hat{S}$.  The
  map $\hat\mu$ is implicitly given by drawing the species tree
  superimposed on the gene tree. In particular, the speciation
  vertices in the gene tree (red circuits) are mapped to the vertices
  of the species tree (gray ovals) and the duplication vertices (blue
  squares) to the edges of the species tree. Gene losses are
  represented with ``$\otimes$'' (mapping to edges in $\hat{S}$). 
  The observable species
  $a,b,\ldots,f$ are the leaves of the species tree (green ovals) and
  extand genes therein are labeled with ``$\odot$''.
  \newline
  \textbf{Right:} The corresponding gene tree $T$ with observed events
  from the left tree. Leaves are labeled with the corresponding
  species.}
\label{fig:true}
\end{figure}

The true evolutionary history of a single ancestral gene thus can be
thought of as a scenario such as the one depicted in
Fig.~\ref{fig:true}. Since we do not consider horizontal gene transfer or
lineage sorting in this contribution, an evolutionary scenario consists of
four components: (1) A true (binary) gene tree $\hat{T}$, (2) a true (binary) species tree
$\hat{S}$, (3) an assignment of an event type (i.e., speciation $\bullet$,
duplication $\square$, loss $\otimes$, or observable (extant) gene $\odot$)
to each interior vertex and leaf of $\hat{T}$, and (4) a map $\mu$ assigning
every vertex of $\hat{T}$ to a vertex or edge of $\hat{S}$ in such a way
that (a) the ancestor order of $\hat{T}$ is preserved, (b) a vertex of
$\hat{T}$ is mapped to an interior vertex of $\hat{S}$ if and only if it is
of type speciation, (c) extant genes of $\hat{T}$ are mapped to leaves of
$\hat{S}$.\footnote{Alternatively, one could define $\hat T$ and $\hat S$
  to be metric graphs (i.e., comprising edges that are real intervals glued
  together at the vertices) with a distance function that measures
  evolutionary time. In this picture, $\hat\mu$ is a continuous map that
  preserves the temporal order and satisfied conditions (b) and (c).}

In order to allow $\hat\mu$ to map duplication vertices to a
time point before the last common ancestor of all species in $\hat{S}$,
we need to extend our definition of a species tree by adding an extra 
vertex and an extra edge ``above'' the last common ancestor of all
species. Note that strictly speaking
$\hat{S}$ is not a phylogenetic tree anymore. In case there is no danger of
confusion, we will from now on refer to a phylogenetic tree on $B$ with
this extra edge and vertex added as a species tree on $B$ and to $\rho_B$
as the root of $B$.  Also, we canonically extend our notions of a triple,
displaying, etc.\ to this new type of species tree.

The true gene tree $\hat{T}$ represents all extant as well as all
extinct genes, all duplication, and all speciation events. Not all of these
events are observable from extant genes data, however. In particular,
extinct genes cannot be observed. The observable part $T=T(V,E)$ of $\hat{T}$
is the restriction of $\hat{T}$ to the leaf set $L$ of extant genes, 
i.e., $T=\hat T|_{L}$. Hence, $T$ is still binary.

Furthermore, we can observe a map $\sigma:L\to B$ that assigns to each
extant gene the species in which it resides. Of course, for $x\in L$ we
have $\sigma(x)=\hat\mu(x)$. Here $B$ is the leaf set of the extant species
tree, i.e., $B=\sigma(L)$.  For ease of readability, we also put
$\sigma(T')=\{\sigma(x)\,:\, x\in L(y)\}$ for any subtree $T'$ of $T$ with
$T'=T(y)$ where $y\in V^0$. Alternatively, we will sometimes also write
$\sigma(y)$ instead of $\sigma(T(y))$.  Last but not least, for $Y\subseteq
L$, we put $\sigma(Y)=\{\sigma(y)\,:\, y\in Y\}$.

The observable part of the species tree  $S=(W,H)$ is the restriction 
$\hat S|_{B}$ of $\hat{S}$ to $B$. In order to
account for duplication events that occurred before the first speciation
event, the additional vertex $\rho_S\in W$ and the additional edge
$[\rho_S,\lca_S{B}]\in H$ must be part of $S$. 

The evolutionary scenario also implies an \emph{event labeling} map
$t:V\to\{\bullet,\square,\odot\}$ that assigns to each interior vertex $v$
of $T$ a value $t(v)$ indicating whether $v$ is a speciation event
($\bullet$) or a duplication event ($\square$). It is convenient to use the
special label $\odot$ for the leaves $x$ of $T$.  We write $(T,t)$ for the
event-labeled tree. We remark that $t$ was introduced as ``symbolic dating
map'' in \cite{Boeckner:98}.  It is called \emph{discriminating} if, for
all edges $\{u,v\}\in E$, we have $t(u) \neq t(v)$ in which case $(T,t)$ is
known to be in 1-1-correspondence to a cograph \cite{Hellmuth:12d}
(see e.g.\ \cite{HW:16book,HW:16a,HSW:16,HW:15,HLS+15} for further discussions and results 
 on tree-representable binary relations).
Note
that we will in general not require that $t$ is discriminating in this
contribution.  For $T=(V,E)$ a gene tree on $L$, $B$ a set of species,
and maps $t$ and $\sigma$ as specified above, we require however that
$\mu$ and $\sigma$ must satisfy the following compatibility property:
\begin{itemize}
\item[(C)] Let $z\in V$ be a speciation vertex, i.e., $t(z)=\bullet$, and let
  $T'$ and $T''$ be subtrees of $T$ rooted in two distinct children of $z$.  
Then  $\sigma(T')\cap \sigma(T'') = \emptyset$.
\end{itemize}
Note the we do not require the converse, i.e., from the disjointness of the
species sets $\sigma(T') $ and $\sigma(T'')$ we do \textbf{not} conclude
that their last common ancestor is a speciation vertex.

For $x,y\in L$ and $z=\lca_T(x,y)$ it immediately follows from condition
(C) that if $t(\lca_T(x,y))=\bullet$ then $\sigma(x)\ne\sigma(y)$ since, by
assumption, $x$ and $y$ are leaves in distinct subtrees below $z$.
Equivalently, two distinct genes $x\ne y$ in $L$ for which
$\sigma(x)=\sigma(y)$ holds, that is, they are contained in the same
species of $B$, must have originated from a duplication event, i.e.,
$t(\lca_T(x,y))=\square$. Thus we can regard $\sigma$ as a proper vertex
coloring of the cograph corresponding to $(T,t)$.

Let us now consider the properties of the restriction of $\hat\mu$ to the
observable parts $T$ of $\hat{T}$ and $S$ of $\hat{S}$.  Consider a
speciation vertex $x$ in $\hat{T}$. If $x$ has two children $y'$ and $y''$
so that $L(y')$ and $L(y'')$ are both non-empty then
$x=\lca_{\hat{T}}(z',z'')$ for all $z'\in L(y')$ and $z''\in L(y'')$ and
hence, $x = \lca_T (L(y')\cup L(y''))$.  In particular, $x$ is an observable
vertex in $T$.  Furthermore, we know that $\sigma(L(y'))\cap \sigma(L(y''))
= \emptyset$, and therefore, $\hat\mu (x) = \lca_S(\sigma (L(y')\cup
L(y''))$.  Considering all pairs of children with this property this can be
rephrased as $\hat\mu(x)=\lca_{\hat S}(\sigma(L(x)))$.  On the other hand,
if $x$ does not have at least two children with this property, and hence
the corresponding speciation vertex cannot be viewed as most recent common
ancestor of the set of its descendants in $S$, then $x$ is not a vertex in
the restriction $T=\hat T|_{L}$ of $\hat T$ to the set $L$ of the extant
genes. The restriction $\mu$ of $\hat\mu$ to the observable tree $T$
therefore satisfies the properties used below to define reconciliation
maps. 

\begin{definition} \label{def:mu} 
  Suppose that $B$ is a set of species,
  that $S=(W,H)$ is a phylogenetic tree on $B$, that $T=(V,E)$ is a gene
  tree with leaf set $L$ and that $\sigma:L\to B$ and $t:V\to
  \{\bullet,\square,\odot\} $ are the maps described above. Then we say
  that $S$ is a \emph{species tree for $(T,t,\sigma)$} if there is a map
  $\mu:V\to W\cup H$ such that, for all $x\in V$:
\begin{itemize}
\item[(i)]   If $t(x)=\odot$   then $\mu(x)=\sigma(x)$.
\item[(ii)]  If $t(x)=\bullet$ then $\mu(x)\in W\setminus B$.
\item[(iii)] If $t(x)=\square$ then $\mu(x)\in H$. 
\item[(iv)]  Let $x,y\in V$ with $x\prec_T y$. We distinguish two cases:
  \begin{enumerate}
  \item  If $t(x)=t(y)=\square$ then $\mu(x)\preceq_S \mu(y)$ in $S$.
  \item  If $t(x)=t(y)=\bullet$ or $t(x)\neq t(y)$ then 
    $\mu(x)\prec_S\mu(y)$ in $S$.
  \end{enumerate}
\item[(v)]   If $t(x)=\bullet$ then 
        $\mu(x)=\lca_S( \sigma(L(x)) )$ 
\end{itemize}
We call $\mu$ the reconciliation map from $(T,t,\sigma)$ to $S$. 
\end{definition}                

We note that $\mu^{-1}(\rho_S)=\emptyset$ holds as an immediate consequence of
property \textit{(v)}, which implies that no speciation node can be mapped
above $\lca_S(B)$, the unique child of $\rho_S$.

We illustrate this definition by means of an example in
Fig.~\ref{fig:mapping} and remark that it is consistent with the definition
of reconciliation maps for the case when the event labeling $t$ on $T$ is
not known \cite{Doyon:09}. Continuing with our notation from
Definition~\ref{def:mu} for the remainder of this section, we easily derive
their axiom set as
\ \newline 
\begin{lemma}
  If $\mu$ is a reconciliation map from $(T,t,\sigma)$ to $S$ 
  and $L$ is the leaf set of $T$ then, for all $x\in V$,
\begin{itemize}
  \item[(D1)]   $x\in L$ implies $\mu(x)=\sigma(x)$.
  \item[(D2.a)] $\mu(x)\in W$ implies $\mu(x)=\lca_S( \sigma(L(x)) )$.
  \item[(D2.b)] $\mu(x)\in H$ implies $\lca_S(\sigma(L(x))) \prec_S \mu(x)$.
  \item[(D3)]   Suppose $x,y\in V$ such that $x\prec_T y$. 
                If $\mu(x),\mu(y)\in H$ then 
                $\mu(x)\preceq_S \mu(y)$; otherwise 
                $\mu(x)\prec_S \mu(y)$.
\end{itemize}
\end{lemma}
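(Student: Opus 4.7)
The plan is to treat each axiom (D1)--(D3) separately as a short case analysis on the event labels $t(x)$ and $t(y)$, using only the defining clauses (i)--(v) together with the dichotomy ``$\mu(\cdot)$ lies in the vertex set $W$ or in the edge set $H$''. Nothing deep is involved; the argument amounts to unfolding the definitions and selecting the clause that matches each configuration.

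For (D1) I would simply observe that $x\in L$ forces $t(x)=\odot$, so (i) delivers $\mu(x)=\sigma(x)$ at once. For (D2.a) I would first rule out $t(x)=\square$ (excluded by (iii) since $\mu(x)\in W$, not $H$), leaving the subcases $t(x)=\bullet$ and $t(x)=\odot$; the first is handled directly by (v), and in the second $L(x)=\{x\}$, so $\lca_S(\sigma(L(x)))=\sigma(x)=\mu(x)$ by (i). For (D2.b) I would argue that $\mu(x)\in H$ forces $t(x)=\square$ (since a leaf is mapped into $B\subseteq W$ by (i) and a speciation vertex into $W\setminus B$ by (ii)). Then, for any $y\in L(x)$, applying (iv.2) to $y\prec_T x$ with $t(y)=\odot\ne\square=t(x)$ gives $\sigma(y)=\mu(y)\prec_S\mu(x)$. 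Writing $\mu(x)=[u,v]$ and using the paper's convention that $z\prec_S [u,v]$ is equivalent to $z\preceq_S v$, this says $\sigma(y)\preceq_S v$ for every $y\in L(x)$, so the greatest lower bound satisfies $\lca_S(\sigma(L(x)))\preceq_S v$, which by the same convention is precisely $\lca_S(\sigma(L(x)))\prec_S\mu(x)$.

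For (D3) I plan to split on where $\mu(x)$ and $\mu(y)$ live, using that $x\prec_T y$ forces $y$ to be an interior vertex, so $t(y)\in\{\bullet,\square\}$. If both $\mu(x),\mu(y)\in H$, clauses (i)--(iii) together force $t(x)=t(y)=\square$, and (iv.1) yields the non-strict inequality $\mu(x)\preceq_S\mu(y)$. In every remaining case at least one of $t(x),t(y)$ fails to be $\square$, so the pair either satisfies $t(x)=t(y)=\bullet$ or satisfies $t(x)\ne t(y)$, and in both situations (iv.2) yields the strict inequality $\mu(x)\prec_S\mu(y)$.

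The only place I expect a small wrinkle is (D2.b), where one must carefully unwrap the extended order relation between a vertex and an edge to recognise that the statement $\lca_S(\sigma(L(x)))\prec_S\mu(x)$ really means ``the lca lies at or below the lower endpoint of the edge $\mu(x)$''. Once that convention is in hand, each axiom reduces to a one-line application of the matching clause in Definition~\ref{def:mu}.
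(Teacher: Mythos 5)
Your proof is correct and follows essentially the same route as the paper's, namely unfolding the clauses (i)--(v) of the definition of a reconciliation map case by case on the event labels. You are in fact somewhat more explicit than the paper on (D2.a) (handling the leaf subcase separately) and on (D2.b) (unwrapping the vertex--edge order convention rather than just noting that equality is impossible because $\lca_S(\sigma(L(x)))\in W$ while $\mu(x)\in H$), but the underlying argument is identical.
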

\begin{proof} Suppose $x\in V$. Then
  (D1) is equivalent to \textit{(i)} and the fact that $t(x)=\odot$
  if and only if $x\in L$. Conditions \textit{(ii)} and \textit{(v)}
  together imply (D2.a). If $\mu(x)\in H$ then $x$ is duplication vertex
  of $T$.  From condition \textit{(iv)} we conclude that $\lca_S(\sigma(L(x)))
  \preceq_S \mu(x)$. Since $\lca_S( \sigma(L(x)) )\in W$, equality
  cannot hold and so (D2.b) follows.  (D3) is an immediate consequence of 
  \textit{(iv)}.
\end{proof}

\begin{figure}[tbp]
\begin{center}
\includegraphics[bb= 171 500 473 714, scale=0.9]{./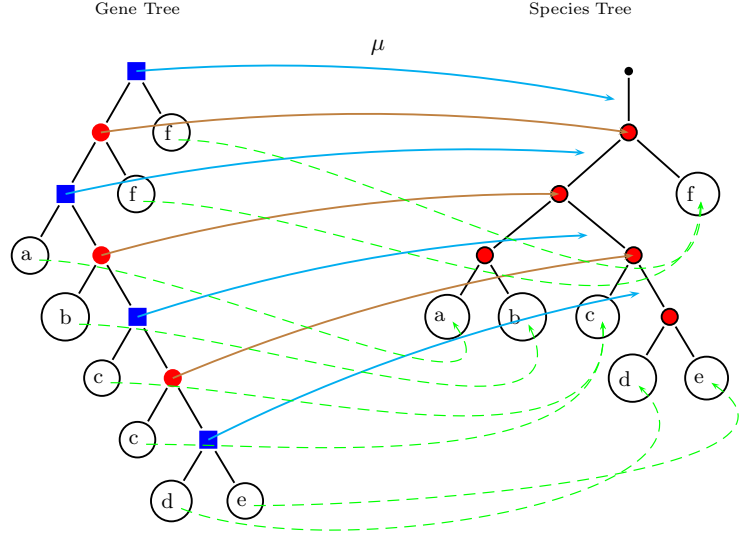}
\end{center}
\caption{Example of the mapping $\mu$ of nodes of the gene tree $T$ to the
  species tree $S$. Speciation nodes in the gene tree (red circles) are
  mapped to nodes in the species tree, duplication nodes (blue squares) are
  mapped to edges in the species tree.  $\sigma$ is shown as dashed green
  arrows. For clarity of exposition, we have identified the leaves of the 
  gene tree on the left with the species they reside in via the map 
  $\sigma$.}
\label{fig:mapping}
\end{figure}

For $T$ a gene tree, $B$ a set of species and maps $\sigma$ and $t$ as
above, our goal is now to characterize (1) those $(T,t,\sigma)$ for which a
species tree on $B$ exists and (2) species trees on $B$ that are species
trees for $(T,t,\sigma)$.

\section*{Results and Discussion}

\subsection*{Results}
Unless stated otherwise, we continue with our assumptions on $B$,
$(T,t,\sigma)$, and $S$ as stated in Definition~\ref{def:mu}. We start with
the simple observation that a reconciliation map from $(T,t,\sigma)$ to $S$
preserves the ancestor order of $T$ and hence $T$ imposes a strong
constraint on the relationship of most recent common ancestors in $S$:
\begin{lemma}\label{lem:1}
  Let $\mu:V\to W\cup H$ be a reconciliation map from $(T,t,\sigma)$ to
  $S$. Then
  \begin{equation}
    \lca_S(\mu(x),\mu(y)) \preceq_S\mu(\lca_T(x,y))
  \end{equation}
  holds for all $x,y\in V$. 
\end{lemma}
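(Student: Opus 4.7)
The plan is to exploit the order-preservation built into the definition of a reconciliation map. Set $z=\lca_T(x,y)$, so that $x\preceq_T z$ and $y\preceq_T z$ by the very definition of the most recent common ancestor. I would then argue that $\mu(z)$ is a common $\preceq_S$-upper bound of $\mu(x)$ and $\mu(y)$ in $W\cup H$, and conclude via the characterization of $\lca_S$ as the $\preceq_S$-minimum such upper bound.

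For the first step, if $x=z$ then trivially $\mu(x)=\mu(z)\preceq_S\mu(z)$; otherwise $x\prec_T z$ and condition \textit{(iv)} of Definition~\ref{def:mu} directly yields $\mu(x)\preceq_S\mu(z)$, irrespective of the labels $t(x)$ and $t(z)$ (the strict/non-strict distinction between the two sub-cases of \textit{(iv)} is irrelevant here, since we only need the weak inequality). The identical reasoning applied to $y$ produces $\mu(y)\preceq_S\mu(z)$.

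Having established that $\mu(z)$ dominates both $\mu(x)$ and $\mu(y)$ in the extended order on $W\cup H$, the inequality $\lca_S(\mu(x),\mu(y))\preceq_S\mu(z)=\mu(\lca_T(x,y))$ follows immediately from minimality of the least common ancestor among common upper bounds.

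The one point that warrants a brief comment is that $\mu(x)$ or $\mu(y)$ may lie in $H$ rather than $W$, so that $\lca_S(\mu(x),\mu(y))$ is taken in the extension of $\preceq_S$ from $W$ to $W\cup H$ that was recorded in the Methods section. Under that extension every edge $[u,v]\in H$ sits strictly between $v$ and $u$, and $\lca_S$ remains well-defined as the unique $\preceq_S$-minimal common upper bound (which is always a vertex of $W$); with this convention the chain of inequalities above is unchanged. I do not anticipate any genuine obstacle: the statement is essentially a direct unpacking of axiom \textit{(iv)} combined with the universal property of $\lca$.
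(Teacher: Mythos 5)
Your proof is correct and rests on the same core observation as the paper's: condition \textit{(iv)} makes $\mu(\lca_T(x,y))$ a common $\preceq_S$-upper bound of $\mu(x)$ and $\mu(y)$, so it lies above $\lca_S(\mu(x),\mu(y))$. The paper phrases this via the path from $x$ to $y$ and the images of its two halves, while you invoke the minimality of $\lca_S$ among common upper bounds directly, but the argument is essentially identical (and your remark on extending $\preceq_S$ and $\lca_S$ to $W\cup H$ is at least as careful as the paper's treatment).
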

\begin{proof}
  Assume that $x$ and $y$ are distinct vertices of $T$.  Consider the
  unique path $P$ connecting $x$ with $y$. $P$ is uniquely subdivided into
  a path $P'$ from $x$ to $\lca_T(x,y)$ and a path $P''$ from $\lca_T(x,y)$
  to $y$. Condition (iv) implies that the images of the vertices of $P'$
  and $P''$ under $\mu$, resp., are ordered in $S$ with regards to
  $\preceq_S$ and hence are contained in the intervals $Q'$ and $Q''$ that
  connect $\mu(\lca_T(x,y))$ with $\mu(x)$ and $\mu(y)$, respectively.  In
  particular $\mu(\lca_T(x,y))$ is the largest element (w.r.t.\
  $\preceq_S$) in the union of $Q'\cup Q''$ which contains the unique path
  from $\mu(x)$ to $\mu(y)$ and hence also $\lca_S(\mu(x),\mu(y))$.
\end{proof}

Since a phylogenetic tree (in the original sense) $T$ is uniquely
determined by its induced triple set $\mathfrak{R}(T)$, it is reasonable to
expect that all the information on the species tree(s) for $(T,t,\sigma)$
is contained in the images of the triples in $\mathfrak{R}(T)$ (or more
precisely their leaves) under $\sigma$.  However, this is not the case in
general as the situation is complicated by the fact that not all triples in
$\mathfrak{R}(T)$ are informative about a species tree that displays
$T$. The reason is that duplications may generate distinct paralogs long
before the divergence of the species in which they eventually appear.  To
address this problem, we associate to $(T,t,\sigma)$ the set of triples
\begin{equation}
  \mathfrak{G} = \mathfrak{G}(T,t,\sigma) = 
  \left\{ r \in \mathfrak{R}(T)\big\vert 
    t(\lca_T(r))=\bullet \, \textrm{ and } \,
    \sigma(x)\not=\sigma(y),\, \textrm{for all}\,
    x,y\in L(r)\,\textrm{pairwise distinct}\right\}.
\end{equation}
As we shall see below, $\mathfrak{G}(T,t,\sigma)$ contains all the
information on a species tree for $(T,t,\sigma)$ that can be gleaned from
$(T,t,\sigma)$.

\begin{lemma}\label{lem:display}
  If $\mu$ is a reconciliation map
  from $(T,t,\sigma)$ to $S$  and $((x,y),z) \in \mathfrak{G}(T,t,\sigma)$ 
  then $S$ displays  $((\sigma(x),\sigma(y)),\sigma(z))$. 
\end{lemma}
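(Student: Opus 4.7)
The plan is to show $\lca_S(\sigma(x),\sigma(y)) \prec_S \lca_S(\sigma(x),\sigma(y),\sigma(z))$, which is exactly what it means for $S$ to display $((\sigma(x),\sigma(y)),\sigma(z))$. I would set $u := \lca_T(x,y)$ and $p := \lca_T(x,y,z)$; the assumption $((x,y),z) \in \mathfrak{G}(T,t,\sigma)$ immediately yields $u \prec_T p$, $t(p) = \bullet$, and the three species pairwise distinct. Condition \textit{(v)} of Definition~\ref{def:mu} then identifies $\mu(p) = \lca_S(\sigma(L(p)))$, which will serve as the common anchor.

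Since $T$ is binary, I would write the two children of $p$ in $T$ as $p_1, p_2$, chosen so that $u \preceq_T p_1$ (hence $x, y \in L(p_1)$) and $z \in L(p_2)$. Three facts about the $p_i$ will then combine. Condition (C) gives the disjointness $\sigma(L(p_1)) \cap \sigma(L(p_2)) = \emptyset$, so by binarity $\sigma(L(p)) = \sigma(L(p_1)) \cup \sigma(L(p_2))$. Condition \textit{(iv)}, applied with $t(p) = \bullet$, yields the strict relation $\mu(p_i) \prec_S \mu(p)$ for $i = 1, 2$. And (D2.a)/(D2.b), together with the trivial case where $p_i$ is a leaf, give $\lca_S(\sigma(L(p_i))) \preceq_S \mu(p_i)$. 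Together these force $\sigma(L(p_i))$ to lie entirely in the subtree of $S$ rooted at a unique child $\pi_i$ of $\mu(p)$.

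The only delicate step is showing $\pi_1 \neq \pi_2$. I would argue this either via monotonicity of $\lca_S$, giving $\lca_S(\mu(p_1), \mu(p_2)) \succeq_S \lca_S(\sigma(L(p_1)) \cup \sigma(L(p_2))) = \mu(p)$ combined with the reverse inequality from Lemma~\ref{lem:1} to obtain equality; or, equivalently, by contradiction: if $\pi_1 = \pi_2 = \pi$, then $\sigma(L(p)) = \sigma(L(p_1)) \cup \sigma(L(p_2))$ sits entirely below the single child $\pi$, so $\mu(p) = \lca_S(\sigma(L(p))) \preceq_S \pi \prec_S \mu(p)$, absurd. This is precisely where the binarity of $T$ at $p$ is used: without it, other children of $p$ could be what forces $\mu(p)$ to split, leaving $p_1, p_2$ free to map below a common child of $\mu(p)$.

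With $\pi_1 \neq \pi_2$ in hand, the conclusion follows at once. Both $\sigma(x)$ and $\sigma(y)$ sit below $\pi_1$, so $\lca_S(\sigma(x),\sigma(y)) \preceq_S \pi_1 \prec_S \mu(p)$; meanwhile $\sigma(z)$ sits below $\pi_2$, in a subtree of $\mu(p)$ disjoint from that of $\pi_1$, forcing $\lca_S(\sigma(x),\sigma(y),\sigma(z)) = \mu(p)$. Chaining these gives $\lca_S(\sigma(x),\sigma(y)) \prec_S \lca_S(\sigma(x),\sigma(y),\sigma(z))$, which is the desired displayed triple.
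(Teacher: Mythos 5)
Your proof is correct, but it takes a genuinely different route from the paper's. The paper argues along the path from $\lca_T(x,y)$ to $\lca_T(x,y,z)$, splitting into two cases according to whether $t(\lca_T(x,y))$ is a speciation or a duplication: in the first case it combines Conditions \textit{(v)} and \textit{(iv)} to get $\mu(\lca_T(x,y))\prec_S\mu(\lca_T(x,y,z))$, and in the second it invokes Lemma~\ref{lem:1} together with \textit{(iv)} to squeeze $\lca_S(\sigma(x),\sigma(y))$ strictly below $\mu(\lca_T(x,y,z))$. You instead perform a purely local analysis at the speciation vertex $p=\lca_T(x,y,z)$: its two children carry $\{x,y\}$ and $z$ respectively, and the combination of \textit{(iv)}, (D2.a)/(D2.b), and the exactness of $\mu(p)=\lca_S(\sigma(L(p)))$ from \textit{(v)} forces the two species sets $\sigma(L(p_1))$ and $\sigma(L(p_2))$ into distinct subtrees below $\mu(p)$, which is case-free (you never need to know the event type of $\lca_T(x,y)$) and, arguably, makes the strictness $\lca_S(\sigma(x),\sigma(y))\prec_S\lca_S(\sigma(x),\sigma(y),\sigma(z))$ more transparent than the paper's rather terse concluding sentences. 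Two small remarks: your appeal to Condition (C) is decorative --- the decomposition $\sigma(L(p))=\sigma(L(p_1))\cup\sigma(L(p_2))$ needs only that $p$ has exactly the two children $p_1,p_2$, and the disjointness of the subtrees of $S$ below $\pi_1\neq\pi_2$ is automatic; and, as you note yourself, your argument genuinely uses that $T$ is binary at $p$. The paper does assert binarity of the observable gene tree ($T=\hat T|_L$ ``is still binary''), so this is a legitimate standing assumption here, but it is worth being aware that the paper's own argument does not lean on it.
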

\begin{proof}
Put $\mathfrak{G} = \mathfrak{G}(T,t,\sigma)$ and recall that
$L$ denotes the leaf set of $T$.  Let $\{x,y,z\}\in\binom{L}{3}$
and assume w.l.o.g.\, that $((x,y),z)\in \mathfrak G$. First 
consider the case that
  $t(\lca_T(x,y))=\bullet$. From condition \textit{(v)} we conclude that
  $\mu(\lca_T(x,y)) =\lca_S(\sigma(x),\sigma(y))$ and $\mu(\lca_T(x,y,z))
  =\lca_S(\sigma(x),\sigma(y),\sigma(z))$. Since, by assumption, 
  $\lca_T(x,y) \prec
  \lca_T(x,y,z)$, we have as a consequence of condition
  \textit{(iv)} that $\mu(\lca_T(x,y))\prec \mu(\lca_T(x,y,z))$.  From
  $\lca_T(x,z)=\lca_T(y,z) = \lca_T(x,y,z)$ we conclude that 
  $S$ must display $((\sigma(x),\sigma(y)),\sigma(z))$ as
  $S$ is assumed to be a species tree for $(T,t,\sigma)$.
  
  Now suppose that $t(\lca_T(x,y))=\square$ and therefore,
  $\mu(\lca_T(x,y))\in H$. Moreover, $\mu(\lca_T(x,y,z))\in W$ holds.
  Hence, Lemma~\ref{lem:1} and property \textit{(iv)} together 
  imply that $\lca_S(\sigma(x),\sigma(y))\prec_S\mu(\lca_T(x,y))\prec_S
  \mu(\lca_T(x,y,z))$. Thus, we again obtain that the triple 
  $((\sigma(x),\sigma(y)),\sigma(z))$ is displayed by $S$.
\end{proof}

\begin{figure}[tbp]	
\centering
  \includegraphics[bb=0 0 571 163, width=0.8\textwidth]{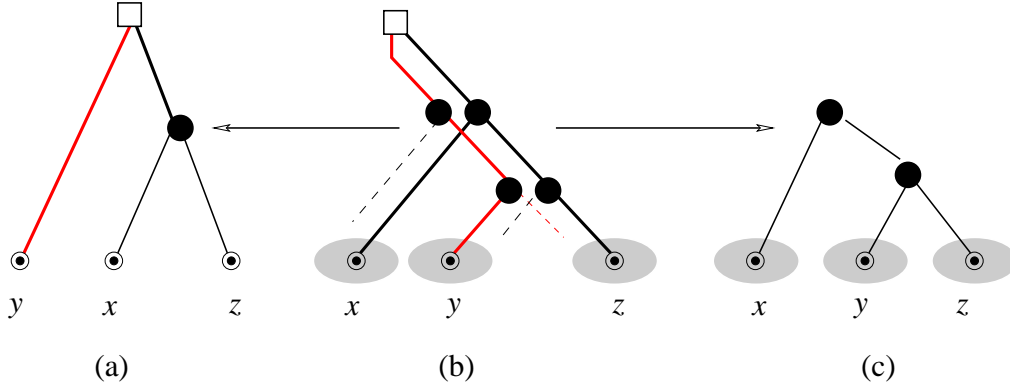}
  \caption{Triples from $T$ whose root is a duplication event are in
    general not displayed from the species tree $S$. (a) Triple with
    duplication event at the root obtained from the true evolutionary
    history of $T$ shown in panel (b). Panel (c) is the true species
    tree. In the triple (a) the species $y$ appears as the outgroup even
    though the $x$ is the outgroup in the true species tree.}
\label{fig:cnt1}
\end{figure}

It is important to note that a similar argument cannot be made for triples
in $\mathfrak{R}(T)$ rooted in a duplication vertex of $T$ as such triplets
are in general not displayed by a species tree for $(T,t,\sigma)$.  We
present the generic counterexample in Fig.~\ref{fig:cnt1}.

To state our main result (Theorem~\ref{thm:order}), we require a 
further definition.

\begin{definition}
  For $(T,t,\sigma)$, we define the set 
  \begin{equation}
    \mathfrak{S}=\mathfrak{S}(T,t,\sigma) = 
    \left\{ ((a,b),c)   |\, \exists ((x,y),z)\in\mathfrak{G}(T,t,\sigma) 
      \textrm{ with } 
      \sigma(x)=a,\, \sigma(y)=b,\,\textrm{and}\,\, \sigma(z)=c \right\}
  \end{equation}
\end{definition}

As an immediate consequence of Lemma~\ref{lem:display},
$\mathfrak{S}(T,t,\sigma)$ must be displayed by any species tree
for $(T,t,\sigma)$ with leaf set $B$.

\begin{theorem}
Let $S$ be a species tree with leaf set $B$. Then
there exists a reconciliation map $\mu$ from $(T,t,\sigma)$ to $S$
whenever $S$ displays all triples in $\mathfrak{S}(T,t,\sigma)$. 
\label{thm:order}
\end{theorem}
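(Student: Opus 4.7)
The plan is to construct a reconciliation map $\mu$ explicitly and verify the conditions of Definition~\ref{def:mu}. For $x\in L$ set $\mu(x)=\sigma(x)$; for $t(x)=\bullet$ set $\mu(x)=\lca_S(\sigma(L(x)))$; and for $t(x)=\square$ let $v_x=\lca_S(\sigma(L(x)))$ and set $\mu(x)=[u_x,v_x]\in H$, where $u_x$ is the parent of $v_x$ in $S$. The parent always exists since $\sigma(L(x))\subseteq B$ forces $v_x\preceq_S\lca_S(B)\prec_S\rho_S$. Conditions (i), (iii) and (v) hold by construction. For (ii), since the speciation vertex $x$ has at least two children, property (C) supplies leaves $a,b\in L(x)$ with $\sigma(a)\neq\sigma(b)$, so $\mu(x)\succeq_S\lca_S(\sigma(a),\sigma(b))$ is an internal vertex of $S$, i.e.\ $\mu(x)\in W\setminus B$.

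For condition (iv), take $x\prec_T y$ and write $v_y=\lca_S(\sigma(L(y)))$. From $\sigma(L(x))\subseteq\sigma(L(y))$ we always have $v_x\preceq_S v_y$, and via the paper's edge--vertex conventions this immediately handles the cases $t(x)=t(y)=\square$, $x\in L$, and $t(x)=\bullet,\,t(y)=\square$ (in the leaf subcase strictness is automatic, since $\sigma(x)\in B$ while $\mu(y)$ lies strictly above it). The two remaining cases $t(x)=t(y)=\bullet$ and $t(x)=\square,\,t(y)=\bullet$ both reduce to the strict inequality $v_x\prec_S v_y$: granted this, in the duplication--speciation case $u_x\preceq_S v_y$ follows because $u_x$ is the parent of $v_x$, which is exactly $\mu(x)\prec_S\mu(y)$ under the edge--vertex convention.

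The main obstacle is this strict inequality, and it is the only place where the displaying hypothesis is used. Assume for contradiction that $v_x=v_y=:w$. Since $t(y)=\bullet$, there is a child $y_1$ of $y$ with $x\preceq_T y_1$ and another child $y_2\neq y_1$; pick $z\in L(y_2)$. By (C), $\sigma(z)\notin\sigma(L(y_1))\supseteq\sigma(L(x))$, while $\sigma(z)\in\sigma(L(y))$ forces $\sigma(z)\preceq_S w=v_x$. Next I locate $a,b\in L(x)$ with $\lca_S(\sigma(a),\sigma(b))=v_x$: if $|\sigma(L(x))|\geq 2$ then $v_x$ is internal, and such a pair exists, for otherwise every pair of images from $\sigma(L(x))$ would lie in a common child subtree of $v_x$, which by transitivity would collapse all of $\sigma(L(x))$ into one such subtree, contradicting $v_x=\lca_S(\sigma(L(x)))$; if $|\sigma(L(x))|=1$ then $v_x$ is the unique leaf in $\sigma(L(x))$, and $\sigma(z)\neq v_x$ already contradicts $\sigma(z)\preceq_S v_x$. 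Now consider the triple $r=((a,b),z)$: since $a,b\in L(x)\subseteq L(y_1)$ and $z\in L(y_2)$, one computes $\lca_T(a,b,z)=y$, so $t(\lca_T(r))=\bullet$, and the species $\sigma(a),\sigma(b),\sigma(z)$ are pairwise distinct. Hence $r\in\mathfrak{G}$ and $((\sigma(a),\sigma(b)),\sigma(z))\in\mathfrak{S}$, so $S$ must display it. But $\sigma(z)\preceq_S\lca_S(\sigma(a),\sigma(b))=v_x$ forces $\lca_S(\sigma(a),\sigma(b),\sigma(z))=\lca_S(\sigma(a),\sigma(b))$, so $S$ fails to display this triple---contradicting the hypothesis and proving $v_x\prec_S v_y$.
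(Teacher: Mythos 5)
Your construction of $\mu$ (leaves via $\sigma$, speciation vertices to $\lca_S(\sigma(L(x)))$, duplication vertices to the edge above that vertex) is exactly the paper's (M1)--(M3), and the verification of conditions (i)--(v) follows the same route, so the proof is correct and essentially the same. The one place you diverge is the strictness step $v_x\prec_S v_y$ when $t(y)=\bullet$: your contradiction argument, which isolates a witness pair $a,b\in L(x)$ with $\lca_S(\sigma(a),\sigma(b))=v_x$ and a leaf $z$ from a sibling subtree of $y$, is a more carefully justified version of the paper's terser ``considering all such triples'' argument, and it also explicitly covers the $|\sigma(L(x))|=1$ subcase for duplication vertices that the paper leaves implicit.
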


\begin{proof}
Recall that $L$ is the leaf set of $T=(V,E)$. Put
$S=(W,H)$ and $\mathfrak S=\mathfrak{S}(T,t,\sigma)$. 
We first consider the subset $G:=\{x\in V\mid t(x)\in\{\bullet,
  \odot\}\}$ of $V$ comprising of the leaves and speciation vertices of $T$.

  We explicitly construct the map $\mu:G\to W$ as follows. 
  For all $x\in V$, we put 
  \begin{itemize}
  \item[(M1)] $\mu(x) = \sigma(x)$ if $t(x)=\odot$,
  \item[(M2)] $\mu(x) = \lca_S(\sigma(L(x)))$ if $t(x)=\bullet$. 
  \end{itemize}
  Note that alternative (M1) ensures that $\mu$ satisfies Condition 
  \textit{(i)}. Also note that in view of the simple consequence following 
  the statement of Condition \textit{(C)} we have  
  for all $x\in V$ with $t(x)=\bullet$ that there are leaves $y',y''\in L(x)$
  with $\sigma(y')\ne\sigma(y'')$. Thus $\lca_S(\mu(L(x))\in W\setminus B$,
  i.e.\ $\mu$ satisfies Condition \textit{(ii)}. Also note that, by definition,
  Alternative (M2) ensures that $\mu$ satisfies Condition \textit{(v)}. 

  \par\noindent\textbf{Claim:} 
  If $x,y \in G$ with $x\prec_T y$ then $\mu(x)\prec_S \mu(y)$.
  \newline
  Since $y$ cannot be a leaf of $T$ as $x\prec_T y$ we have
  $t(y)=\bullet$. There are two cases to consider, either 
  $t(x)=\bullet$ or $t(x)=\odot$. In the latter case $\mu(x)=\sigma(x)\in
  B$ while $\mu(y)\in W\setminus B$ as argued above. Since $x\in L(y)$ we 
  have $\mu(x)\prec_S \mu(y)$, as desired. 
  \newline 
  Now suppose $t(x)=\bullet$. Again by the simple consequence following 
  Condition \textit{(C)}, there are leaves 
  $x', x'' \in L(x)$ with $a=\sigma(x')\neq \sigma(x'')=b$. Since 
  $x\prec_T y$ and $t(y)=\bullet$, by Condition \textit{(C)}, we conclude 
  that   $c = \sigma(y')\notin \sigma(L(x))$ holds
  for all $y'\in L(y)\setminus L(x)$. Thus, $((a,b),c)\in\mathfrak{S}$. 
  But then $((a,b),c)$ is displayed by $S$ and therefore 
  $\lca_S(a,b) \prec_S \lca_S(a,b,c)$. 
  Since this holds for all triples $((x', x''),y')\in \mathfrak{G}$ with 
  $x',x''\in L(x)$ and $y'\in L(y)\setminus L(x)$ we conclude 
  $\mu(x)=\lca_S(\sigma(L(x))) \prec_S 
  \lca_S(\sigma(L(x))\cup\sigma(L(y)\setminus L(x))) = 
  \lca_S(\sigma(L(y)))=\mu(y)$, establishing the Claim.

  It follows immediately that $\mu$ also satisfies Condition
  \textit{(iv.2)} if $x$ and $y$ are contained in $G$.

  Next, we extend the map $\mu$ to the entire vertex set $V$ of $T$ using
  the following observation.  Let $x \in V$ with $t(x)=\square$. We know by
  Lemma~\ref{lem:1} that $\mu(x)$ is an edge $[u,v]\in H$ so that
  $\lca_S(\sigma(L(x)))\preceq_S v$.  Such an edge exists for
  $v=\lca_S(\sigma(L(x)))$ by construction. Every speciation vertex $y\in
  V$ with $x\prec_T y$ therefore necessarily maps above this edge, i.e.,
  $u\preceq_S \mu(y)$ must hold. Thus we set
  \begin{itemize}
  \item[(M3)]$\mu(x)=[u,\lca_S(\sigma(L(x)))]$ if $t(x)=\square$.
  \end{itemize}
  which now makes $\mu$ a map from $V$ to $ W\cup H$.

  By construction, Conditions \textit{(iii)}, 
  \textit{(iv.2)} and \textit{(v)} are thus satisfied by $\mu$. 
  On the other hand, if there is speciation vertex $y$ between
  two duplication vertices $x$ and $x'$ of $T$, i.e., 
  $x\prec_T y \prec_T x'$, then $\mu(x)\prec_S\mu(x')$.
  Thus $\mu$  also satisfies Condition \textit{(iv.1)}.  
\newline
It follows that $\mu$ is a reconciliation map from $(T,t,\sigma)$ to $S$.
\end{proof}

\begin{corollary}
  Suppose that $S$ is a species tree for $(T,t,\sigma)$ and that $L$ and
  $B$ are the leaf sets of $T$ and $S$, respectively. Then a reconciliation
  map $\mu$ from $(T,t,\sigma)$ to $S$ can be constructed in 
  $O( |L| |B| )$.
\end{corollary}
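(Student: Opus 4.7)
The plan is to argue that the reconciliation map $\mu$ described in the constructive proof of Theorem~\ref{thm:order} can be built by a single post-order sweep of $T$ whose dominant cost is the computation of least common ancestors in $S$. First I would observe that because $S$ is a species tree for $(T,t,\sigma)$, the triples in $\mathfrak{S}(T,t,\sigma)$ are displayed by $S$ and therefore the rules (M1)--(M3) from the proof of Theorem~\ref{thm:order} already deliver a valid $\mu$. It thus suffices to bound the time needed to evaluate (M1)--(M3) for every vertex of $T$.

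Next I would describe the algorithm. Assume that $S$ has been preprocessed so that parent pointers in $S$ are available in $O(1)$ time (an $O(|B|)$ one-off cost). Traverse $T$ in post-order. At a leaf $x$, set $\mu(x)=\sigma(x)$ by (M1); this takes $O(1)$ per leaf and $O(|L|)$ in total. At an internal vertex $x\in V^0$ with children $c_1,\dots,c_k$, define $\nu(c_i)$ to be $\mu(c_i)$ if $t(c_i)\in\{\bullet,\odot\}$ and, if $t(c_i)=\square$ with $\mu(c_i)=[u_i,v_i]$, to be $v_i$. By construction $\nu(c_i)=\lca_S(\sigma(L(c_i)))$, so
\[
\lca_S(\sigma(L(x))) \;=\; \lca_S\bigl(\nu(c_1),\dots,\nu(c_k)\bigr).
\]
Compute this iteratively as $\lca_S(\lca_S(\dots\lca_S(\nu(c_1),\nu(c_2))\dots),\nu(c_k))$, each pairwise $\lca_S$ taking $O(|B|)$ time by the naive ``walk up and mark'' procedure in $S$. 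Then apply (M2) if $t(x)=\bullet$, or (M3) if $t(x)=\square$, which requires only one additional parent lookup in $S$.

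For the running time, the number of pairwise $\lca_S$ evaluations performed at a vertex $x$ is $\max\{0,k-1\}$ where $k$ is the number of children of $x$. Summing over all $x\in V^0$ gives $\sum_{x\in V^0}(\deg^+(x)-1)=(|V|-1)-|V^0|=|L|-1$, using that each non-root vertex contributes exactly one parent-edge. Thus at most $|L|-1$ pairwise LCA queries are performed, each of cost $O(|B|)$, and the total work is $O(|L||B|)$ as claimed.

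The only delicate point, and what I would treat as the main obstacle, is ensuring that the shortcut $\nu(c_i)=\lca_S(\sigma(L(c_i)))$ is legitimate for duplication children, since there $\mu(c_i)$ is an edge of $S$ rather than a vertex. This is immediate from rule (M3), which places $\mu(c_i)$ above exactly the vertex $\lca_S(\sigma(L(c_i)))$, so replacing the edge by its lower endpoint does not alter the value of the LCA computed at the parent. Everything else is routine post-order bookkeeping, and correctness of the resulting $\mu$ follows verbatim from the argument in the proof of Theorem~\ref{thm:order}.
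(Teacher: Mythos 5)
Your proposal is correct and follows essentially the same route as the paper: a bottom-up traversal of $T$ in which each vertex is assigned $\lca_S(\sigma(L(x)))$ (or the edge above it), with total cost $O(|L||B|)$. The only difference is bookkeeping — the paper computes the sets $\sigma(L(x))$ explicitly and invokes a linear-time group-LCA routine from the literature, whereas you avoid the sets via the LCA-of-children's-LCAs decomposition and charge $|L|-1$ naive pairwise LCA queries at $O(|B|)$ each; both yield the stated bound.
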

\begin{proof}
  In order to find the image of an interior vertex $x$ of $T$ under $\mu$,
  it suffices to determine $\sigma(L(x))$ (which can be done for all $x$
  simultaneously e.g.\ by bottom up transversal of $T$ in $O(|L| |B|)$
  time) and $\lca_S(\sigma(L(x)))$. The latter task can be solved in linear
  time using the idea presented in \cite{Zhang:97} to calculate the lowest
  common ancestor for a group of nodes in the species tree.
\end{proof}

We remark that given a species tree $S$ on $B$ that displays all triples in
$\mathfrak{S}(T,t,\sigma)$, there is no freedom in the construction of a
reconciliation map on the set $\{x\in V\mid t(x)\in\{\bullet, \odot\}\}$.
The duplication vertices of $T$, however, can be placed differently,
resulting in possibly exponentially many reconciliation maps from
$(T,t,\sigma)$ to $S$.

Lemma~\ref{lem:display} implies that consistency of the triple set
$\mathfrak{S}(T,t,\sigma)$ is necessary for the existence of a
reconciliation map from $(T,t,\sigma)$ to a species tree on
$B$. Theorem~\ref{thm:order}, on the other hand, establishes that this is
also sufficient. Thus, we have
\begin{theorem}\label{theo:final}
  There is a species tree on $B$ for $(T,t,\sigma)$ if and only 
  if the triple set $\mathfrak{S}(T,t,\sigma)$ is consistent. 
\end{theorem}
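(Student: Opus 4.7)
The plan is to derive Theorem~\ref{theo:final} as a direct combination of Lemma~\ref{lem:display} and Theorem~\ref{thm:order}, treating it essentially as an observation that glues the necessity and sufficiency pieces that have already been proved. The forward direction will be essentially immediate, while the reverse direction requires a small amount of care concerning how one passes from an abstract consistent triple set to an actual species tree on the prescribed leaf set $B$.

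For the forward direction, I would assume that there is a species tree $S$ on $B$ for $(T,t,\sigma)$, so that by definition a reconciliation map $\mu$ from $(T,t,\sigma)$ to $S$ exists. Every triple in $\mathfrak{S}(T,t,\sigma)$ arises, by definition, as $((\sigma(x),\sigma(y)),\sigma(z))$ for some $((x,y),z)\in \mathfrak{G}(T,t,\sigma)$. Lemma~\ref{lem:display} then says that this triple is displayed by $S$, so $S$ witnesses consistency of $\mathfrak{S}(T,t,\sigma)$.

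For the reverse direction, suppose $\mathfrak{S}(T,t,\sigma)$ is consistent. Then \texttt{BUILD} (or any of the algorithms mentioned in the Methods section) produces a phylogenetic tree $S^*$ on the leaf set $B^*:=\bigcup_{r\in\mathfrak{S}}L(\rho_r)\subseteq B$ that displays $\mathfrak{S}(T,t,\sigma)$. To obtain a species tree on the full set $B$, I would attach each species in $B\setminus B^*$ as an additional leaf (e.g., as a child of $\rho_{S^*}$, or via any fixed refinement) and then add the extra edge and vertex $\rho_S$ above $\lca_S(B)$ required by our convention on species trees. The resulting tree $S$ on $B$ still displays $\mathfrak{S}(T,t,\sigma)$ because displayed triples are preserved under adding new leaves and under adding an edge above the root. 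Theorem~\ref{thm:order} then guarantees the existence of a reconciliation map $\mu$ from $(T,t,\sigma)$ to $S$, so $S$ is a species tree for $(T,t,\sigma)$.

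The only mildly delicate step is the attachment of the missing leaves $B\setminus B^*$: one must make sure that adding them does not destroy any of the triples already displayed, and that the extra root $\rho_S$ is present so that $S$ conforms to the extended notion of species tree used throughout. Both points are routine, since displaying is monotone under leaf addition and under prepending a root edge, so no triple of $\mathfrak{S}(T,t,\sigma)$ is affected. With that in hand the theorem follows immediately.
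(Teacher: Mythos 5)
Your proof is correct and follows essentially the same route as the paper, which likewise obtains the theorem by combining Lemma~\ref{lem:display} (necessity) with Theorem~\ref{thm:order} (sufficiency). The only addition is your explicit handling of species in $B$ that appear in no triple of $\mathfrak{S}(T,t,\sigma)$ and of the extra root edge, details the paper leaves implicit; your treatment of them is sound.
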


We remark that a related result is proven in \cite[Theorem.5]{Chauve:09} for
the full tree reconciliation problem starting from a forest of gene trees.

\begin{figure}[tbp]
\centering
\includegraphics[bb= 110 480 569 753, width=\textwidth]{./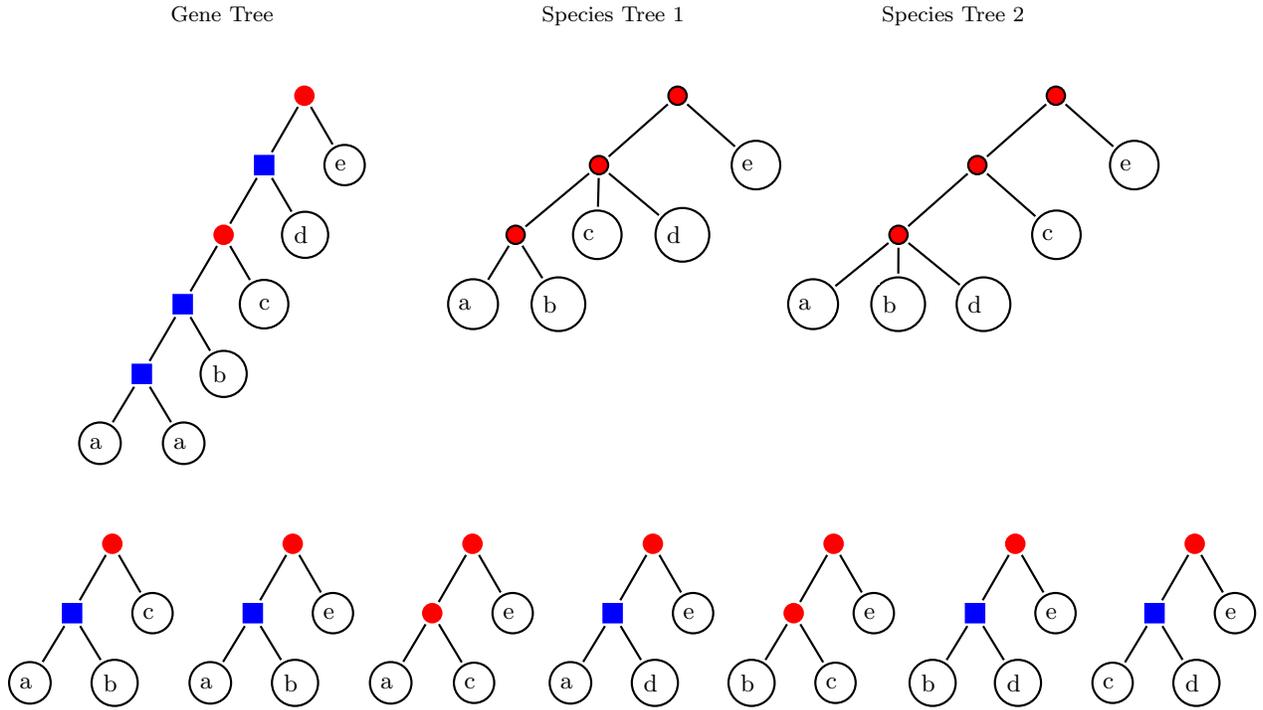}
\caption{The set $\mathfrak{S}(T,t,\sigma)$ inferred from the event labeled
  gene tree $(T,t,\sigma)$ does not necessarily define a unique species
  tree. For clarity of exposition, we have identified, via the map
  $\sigma$, the leaves of the gene tree and of the set of triples
  $\mathfrak{S}(T,t,\sigma)$ with the species they reside in .}
\label{fig:triples}
\end{figure}

It may be surprising that there are no strong restrictions on the set 
$\mathfrak{S}(T,t,\sigma)$ of triples that are implied by the fact that
they are derived from a gene tree $(T,t,\sigma)$.

\begin{theorem}
\label{thm:dings}
For every set $\mathfrak{X}$ of triples on some finite set $B$ of size
at least one there is a gene tree $T=(V,E)$ with leaf set $L$ together
with an event map $t:V\to\{\bullet,\square,\odot\} $ and a map
$\sigma:L\to B$ that assigns to every leaf of $T$ the species in $B$
it resides in such that $\mathfrak{X}=\mathfrak{S}(T,t,\sigma)$.
\end{theorem}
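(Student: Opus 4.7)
My plan is to prove the theorem by an explicit construction, building $(T,t,\sigma)$ out of one three-leaf ``gadget'' per triple in $\mathfrak{X}$ and gluing them together under a single duplication root. Concretely, for each triple $r_i=((a_i,b_i),c_i)\in\mathfrak{X}$ I would introduce a fresh copy of the triple shape as a subtree $T_i$ with new leaves $x_i,y_i,z_i$, define $\sigma(x_i)=a_i$, $\sigma(y_i)=b_i$, $\sigma(z_i)=c_i$, label both interior vertices of $T_i$ with $\bullet$, and label each leaf with $\odot$. Then I would create a new root $\rho$ with $t(\rho)=\square$ whose children are exactly the roots of the $T_i$, and let $T$ be the resulting tree.

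The first routine check is that this really defines a valid event-labeled gene tree satisfying the compatibility property~\textit{(C)}. At each speciation vertex inside a gadget, the two subtrees below it carry disjoint species sets because the three species $a_i,b_i,c_i$ appearing in $r_i$ are distinct by the very definition of a triple on $B$; the duplication root $\rho$ is exempt from~\textit{(C)}. Thus $(T,t,\sigma)$ is well-formed.

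Next I would verify $\mathfrak{S}(T,t,\sigma)=\mathfrak{X}$. For the inclusion $\mathfrak{X}\subseteq\mathfrak{S}$, the unique triple $((x_i,y_i),z_i)$ displayed by $T_i$ has its $\lca$ at the gadget root (a speciation), and its image under $\sigma$ consists of three distinct species, so it lies in $\mathfrak{G}(T,t,\sigma)$ and therefore contributes exactly $r_i$ to $\mathfrak{S}$. For the reverse inclusion, suppose $((a,b),c)\in\mathfrak{S}$ arises from some $((x,y),z)\in\mathfrak{G}(T,t,\sigma)$. Since $\lca_T(x,y,z)$ is a speciation vertex, it cannot be $\rho$, so $x,y,z$ must all lie in a single gadget $T_i$; but $T_i$ has only three leaves and displays only $((x_i,y_i),z_i)$, which forces $((a,b),c)=r_i\in\mathfrak{X}$. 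The role of the duplication root $\rho$ is precisely to block any ``spurious'' cross-gadget triple from ever entering $\mathfrak{G}$, and this is really the only subtle point in the whole argument.

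Finally I would dispose of the degenerate cases. When $|\mathfrak{X}|=1$ the construction as described would give $\rho$ a single child, so I would simply omit $\rho$ and take $T=T_1$, which is already a legitimate phylogenetic tree with three leaves. When $\mathfrak{X}=\emptyset$, I would take $T$ to be any binary tree with $|L|\ge 3$ leaves mapped to an arbitrary fixed element of $B$ and with all interior vertices labelled $\square$; then $\mathfrak{G}(T,t,\sigma)=\emptyset$ for want of any speciation vertex, and the conclusion holds vacuously. I expect no genuine obstacle here: the entire proof is a construction-plus-verification, and the only conceptual step is recognising that inserting a duplication at the top correctly quarantines the gadgets from one another.
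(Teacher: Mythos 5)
Your construction is essentially identical to the paper's: one fresh three-leaf speciation gadget per triple, glued under a single duplication root so that cross-gadget triples have a $\square$-labelled $\lca$ and are thereby excluded from $\mathfrak{G}$. Your write-up is in fact slightly more careful than the paper's (explicit verification of condition \textit{(C)}, both inclusions, and the degenerate cases $|\mathfrak{X}|\le 1$, which the paper glosses over), but the underlying idea is the same.
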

\begin{proof}
  Irrespective of whether $\mathfrak{X}$ is consistent or not we construct
  the components of the required 3-tuple $(T,t,\sigma)$ as follows: To
  each triple $r_k=((x_{k1},x_{k2}),x_{k3}) \in \mathfrak{X}$ we associate a
  triple $T_k=((a_{k1},a_{k2}),a_{k3})$ via a map $\sigma_k:L_k
  =\{a_{k1},a_{k2},a_{k3}\} \to \{x_{k1},x_{k2},x_{k3}\}$ with
  $\sigma(a_{ki})=x_{ki}$ for $i=1,2,3$ where we assume that for any two
  distinct triples $r_k,r_l\in\mathfrak X$ we have that $\sigma_k(L_k)\cap
  \sigma_l(L_l)=\emptyset$.  Then we obtain $T=(V, E)$ by first adding a
  single new vertex $\rho_T$ to the union of the vertex sets of the triples
  $ T_k$ and then connecting $\rho_T$ to the root $\rho_k$ of each of the
  triples $T_k$. Clearly, $T$ is a phylogenetic tree on
  $L=\stackrel{\cdot}{\bigcup}_{r_k\in \mathfrak X} L(\rho_k)$. Next, we
  define the map $t:V\to \{\bullet,\square,\odot\} $ by putting
  $t(\rho_T)=\square$, $t(a)=\odot$ for all $a\in L$ and $t(a)= \bullet$
  for all $a\in V-(L\cup\{\rho_T\})$. Finally, we define the map $\sigma:
  L\to B$ by putting, for all $a\in L$, $\sigma(a)= \sigma_k(a) $ where
  $a\in L_k$. Clearly $\mathfrak{S}(T,t,\sigma)=\mathfrak{X}$.
\end{proof}

We remark that the gene tree constructed in the proof of
Theorem~\ref{thm:dings} can be made into a binary tree by splitting the
root $\rho_T$ into a series of duplication and loss events so that each
subtree is the descendant of a different paralog.

Since by Theorem.~\ref{thm:dings} there are no restrictions on the possible
triple sets $\mathfrak{S}(T,t,\sigma)$, it is clear that $S$ will in
general not be unique. An example is shown in Fig.\ref{fig:triples}.

\subsubsection*{Results for simulated gene trees}

In order to determine empirically how much information on the species tree
we can hope to find in event labeled gene trees, we simulated species
trees together with corresponding event-labeled gene trees with different
duplication and loss rates. Approximately 150 species trees with 10 to 100
species were generated according to the the ``age model''
\cite{ageModel:11}. These trees are balanced and the edge lengths are
normalized so that the total length of the path from the root to each leaf
is 1.  For each species tree, we then simulated a gene tree as described in
\cite{Jobim:11}, with duplication and loss rate parameters $r\in[0,1]$
sampled uniformly. Events are modeled by a Poisson distribution with
parameter $r\cdot\ell$, where $\ell$ is the length of an edge as generated
by age model. Losses were additionally constrained to retain at least one
copy in each species, i.e., $\sigma(L)=B$ is enforced. After determining
the triple set $\mathfrak{S}(T,t,\sigma)$ according to
Theorem~\ref{thm:order}, we used \texttt{BUILD} \cite{Semple:book} (see
also \cite{Aho1981}) to compute the species tree. In all cases,
\texttt{BUILD} returns a tree that is a homomorphic contraction of the
simulated species tree. The difference between the original and the
reconstructed species tree is thus conveniently quantified as the
difference in the number of interior vertices.  Note that in our situation
this is the same as the split metric \cite{Semple:book}.

\begin{figure}[tbp]	
\centering
\includegraphics[width=\linewidth]{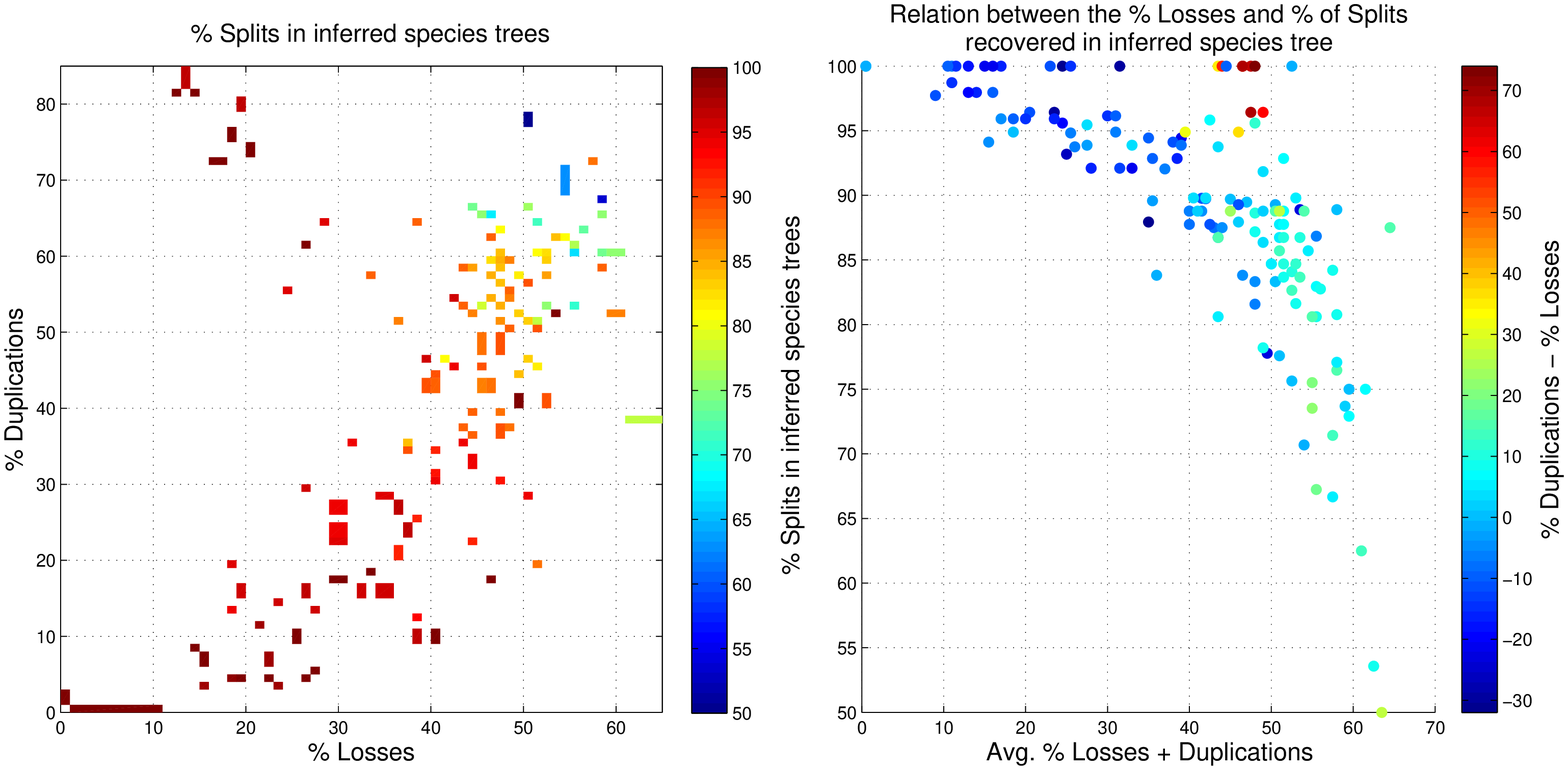}
\caption{\textbf{Left:} Heat map that represents the percentage of
  recovered splits in the inferred species tree from triples obtained from
  simulated event-labeled gene trees with different loss and duplication
  rates.  \newline \textbf{Right:} Scattergram that shows the average of
  losses and duplications in the generated data and the accuracy of the
  inferred species tree.}
\label{fig:splits}
\end{figure}

The results are summarized in Fig. \ref{fig:splits}. Not surprisingly, the
recoverable information decreases in particular with the rate of gene
loss. Nevertheless, at least 50\% of the splits in the species tree are
recoverable even at very high loss rates. For moderate loss rates, in
particular when gene losses are less frequent than gene duplications,
nearly the complete information on the species tree is preserved. It is
interesting to note that \texttt{BUILD} does not incorporate splits that
are not present in the input tree, although this is not mathematically
guaranteed.

\subsection*{Discussion}

Event-labeled gene trees can be obtained by combining the reconstruction of
gene phylogenies with methods for orthology detection. Orthology alone
already encapsulates partial information on the gene tree. More precisely,
the orthology relation is equivalent to a homomorphic image of the gene
tree in which adjacent vertices denote different types of events. We
discussed here the properties of reconciliation maps $\mu$ from a gene tree
$T$ along with an event labelling map $t$ and a gene to species assignment
map $\sigma$ to a species tree $S$ and show that $(T,t)$ event labeled gene
trees for which a species tree exists can be characterized in terms of the
set $\mathfrak{S}(T,t,\sigma)$ of triples that is easily constructed from a
subset of triples of $T$. Simulated data shows, furthermore, that such
trees convey a large amount of information on the underlying species tree,
even if the gene loss rate is high.

It can be expected for real-life data the tree $T$ contains errors so that
$\mathfrak{S} := \mathfrak{S}(T,t,\sigma)$ may not be consistent. In this
case, an approximation to the species tree could be obtained e.g.\ from a
maximum consistent subset of $\mathfrak{S}$. Although (the decision version
of) this problems is NP-complete \cite{Jansson2001,Wu2004}, there is a wide
variety of practically applicable algorithms for this task, see
\cite{He2006,Byrka:10}. Even if $\mathfrak{S}$ is consistent, the species
tree is usually not uniquely determined. Algorithms to list all trees
consistent with $\mathfrak{S}$ can be found e.g.\ in
\cite{Ng1996,Constantinescu1995}. A characterization of triple sets that
determine a unique tree can be found in \cite{Bryant1995}. Since our main
interest is to determine the constraints imposed by $(T,t,\sigma)$ on the
species tree $S$, we are interested in a least resolved tree $S$ that
displays all triples in $\mathfrak{S}$. The \texttt{BUILD} algorithm and
its relatives in general produce minor-minimal trees, but these are not
guaranteed to have the minimal number of interior nodes. Finding a species
tree with a minimal number of interior nodes is again a hard problem
\cite{Jansson:12}. At least, the vertex minimal trees are among the
possibly exponentially many minor minimal trees enumerated by Semple's
algorithms \cite{Semple:03}.

For a given species tree $S$, it is rather easy to find a reconciliation
map $\mu$ from $(T,t,\sigma)$ to $S$. A simple solution $\mu$ is closely related
to the so-called LCA reconcilation: every node $x$ of $T$ is mapped
to the last common ancestor of the species below it, $\lca_S(\sigma(L(x)))$
or to the edge immediately above it, depending on whether $x$ is speciation
or a duplication node. While this solution is unique for the speciation
nodes, alternative mappings are possible for the duplication nodes. The set
of possible reconciliation maps can still be very large despite the
specified event labels. 

If the event labeling $t$ is unknown, there is a reconciliation from any
gene tree $T$ to any species tree $S$, realized in particular by the LCA
reconciliation, see e.g.\ \cite{Chauve:09,Doyon:09}. The reconciliation
then defines the event types. Typically, a parsimony rule is then employed
to choose a reconciliation map in which the number of duplications and
losses is minimized, see e.g.\
\cite{Guigo1996,Bonizzoni2005,Burleigh2009,Gorecki2006}. In our setting, 
on the other hand, the event types are prescribed. This restricts the 
possible reconciliation maps so that the gene tree cannot be reconciled with an
arbitrary species tree any more. 

Since the observable events on the gene tree are fixed, the possible
reconciliations cannot differ in the number of duplications. Still, one may
be interested in reconciliation maps that minimize the number of loss
events. An alternative is to maximize the number of duplication events that
map to the same edge in $S$ to account for whole genome and chromosomal
duplication event \cite{Burleigh2009}. 

\section*{Conclusions}
Our approach to the reconciliation problem via event-labeled
gene trees opens up some interesting new avenues to understanding
orthology.  In particular, the results in this contribution combined with
those in \cite{Hellmuth:12d} concerning cographs should ultimately lead to
a method for automatically generating orthology relations that takes into
account species relationships without having to explicitly compute gene
trees. This is potentially very useful since gene tree estimation is one of
the weak points of most current approaches to orthology analysis.
    
\section*{Competing interests}
The authors declare that they have no competing interests.

\section*{Authors' contributions}

All authors contributed to the development of the theory.  MHR and NW
produced the simulated data. All authors contributed to writing, reading, and
approving the final manuscript.

\section*{Acknowledgements}

  This work was supported in part by the the \emph{Volkswagen Stiftung}
  (proj.\ no.\ I/82719) and the \emph{Deutsche Forschungsgemeinschaft}
  (SPP-1174 ``Deep Metazoan Phylogeny'', proj.\ nos.\ STA 850/2 and STA
  850/3).

 \bibliographystyle{plain}  
  \bibliography{species} 

\begin{thebibliography}{10}

\bibitem{Aho:81}
A.~V. Aho, Y.~Sagiv, T.~G. Szymanski, and J.~D. Ullman.
\newblock Inferring a tree from lowest common ancestors with an application to
  the optimization of relational expressions.
\newblock {\em SIAM J. Comput.}, 10:405--421, 1981.

\bibitem{Aho1981}
A.~V. Aho, Y.~Sagiv, T.~G. Szymanski, and J.~D. Ullman.
\newblock Inferring a tree from lowest common ancestors with an application to
  the optimization of relational expressions.
\newblock {\em SIAM J. Comput.}, 10:405--421, 1981.

\bibitem{Altenhoff:09}
A~M Altenhoff and C.~Dessimoz.
\newblock Phylogenetic and functional assessment of orthologs inference
  projects and methods.
\newblock {\em PLoS Comput Biol.}, 5:e1000262, 2009.

\bibitem{Arvestad2003}
L.~Arvestad, A.~C. Berglund, J.~Lagergren, and B.~Sennblad.
\newblock Bayesian gene/species tree reconciliation and orthology analysis
  using {MCMC}.
\newblock {\em Bioinformatics}, 19:i7--i15, 2003.

\bibitem{Bansal2008}
M.~S. Bansal and O.~Eulenstein.
\newblock The multiple gene duplication problem revisited.
\newblock {\em Bioinformatics}, 24:i132--i138, 2008.

\bibitem{Berglund:08}
A~C Berglund, E~Sj{\"o}lund, G~Ostlund, and E~L Sonnhammer.
\newblock \texttt{InParanoid 6}: eukaryotic ortholog clusters with inparalogs.
\newblock {\em Nucleic Acids Res.}, 36:D263--D266, 2008.

\bibitem{Bininda:book}
O.R.P Bininda-Emonds.
\newblock {\em Phylogenetic Supertrees}.
\newblock Kluwer Academic Press, Dordrecht, NL, 2004.

\bibitem{Boeckner:98}
Sebastian B{\"o}cker and Andreas W.~M. Dress.
\newblock Recovering symbolically dated, rooted trees from symbolic
  ultrametrics.
\newblock {\em Adv. Math.}, 138:105--125, 1998.

\bibitem{Bonizzoni2005}
Paola Bonizzoni, Gianluca Della~Vedova, and Riccardo Dondi.
\newblock Reconciling a gene tree to a species tree under the duplication cost
  model.
\newblock {\em Theor. Comp. Sci.}, 347:36--53, 2005.

\bibitem{Brandstaedt:99}
Andreas Brandst{\"a}dt, Van~Bang Le, and Jeremy~P Spinrad.
\newblock {\em Graph Classes: A Survey}.
\newblock SIAM Monographs on Discrete Mathematics and Applications. Soc. Ind.
  Appl. Math., Philadephia, 1999.

\bibitem{Bryant1995}
D.~Bryant and M.~Steel.
\newblock Extension operations on sets of leaf-labeled trees.
\newblock {\em Adv. Appl. Math.}, 16:425--453, 1995.

\bibitem{Burleigh2009}
J.~G. Burleigh, M.~S. Bansal, A.~Wehe, and O.~Eulenstein.
\newblock Locating large-scale gene duplication events through reconciled
  trees: implications for identifying ancient polyploidy events in plants.
\newblock {\em J. Comput. Biol.}, 16:1071--1083, 2009.

\bibitem{Byrka:10a}
J.~Byrka, P.~Gawrychowski, K.~T. Huber, and S.~Kelk.
\newblock Worst-case optimal approximation algorithms for maximizing triplet
  consistency within phylogenetic networks.
\newblock {\em J. Discr. Alg.}, 8:65--75, 2010.

\bibitem{Byrka:10}
Jaroslaw Byrka, Sylvain Guillemot, and Jesper Jansson.
\newblock New results on optimizing rooted triplets consistency.
\newblock {\em Discr. Appl. Math.}, 158:1136--1147, 2010.

\bibitem{Chauve2008}
C.~Chauve, J.~P. Doyon, and N.~El-Mabrouk.
\newblock Gene family evolution by duplication, speciation, and loss.
\newblock {\em J. Comput. Biol.}, 15:1043--1062, 2008.

\bibitem{Chauve:09}
Cedric Chauve and Nadia El-Mabrouk.
\newblock New perspectives on gene family evolution: Losses in reconciliation
  and a link with supertrees.
\newblock {\em LNCS}, 5541:46--58, 2009.

\bibitem{Constantinescu1995}
Mariana Constantinescu and David Sankoff.
\newblock An efficient algorithm for supertrees.
\newblock {\em J Classification}, 12:101--112, 1995.

\bibitem{Datta:09}
Ruchira~S. Datta, Christopher Meacham, Bushra Samad, Christoph Neyer, and
  Kimmen Sj{\"o}lander.
\newblock Berkeley {PHOG}: Phylofacts orthology group prediction web server.
\newblock {\em Nucl. Acids Res.}, 37:W84--W89, 2009.

\bibitem{Doyon:09}
Jean-Philippe Doyon, Cedric Chauve, and Sylvie Hamel.
\newblock Space of gene/species trees reconciliations and parsimonious models.
\newblock {\em J. Comp. Biol.}, 16:1399--1418, 2009.

\bibitem{Dress:book}
Andreas W.~M. Dress, Katharina~T. Huber, Jacobus Koolen, Vincent Moulton, and
  Andreas Spillner.
\newblock {\em Basic Phylogenetic Combinatorics}.
\newblock Cambridge University Press, Cambridge, 2011.

\bibitem{Fitch2000}
Walter~M. Fitch.
\newblock Homology: a personal view on some of the problems.
\newblock {\em Trends Genet.}, 16:227--231, 2000.

\bibitem{Goodstadt:06}
Leo Goodstadt and Chris~P Ponting.
\newblock Phylogenetic reconstruction of orthology, paralogy, and conserved
  synteny for dog and human.
\newblock {\em PLoS Comput. Biol.}, 2:e133, 2006.

\bibitem{Gorecki2006}
P.~G{\'o}recki and Tiuryn J.
\newblock {DSL}-trees: A model of evolutionary scenarios.
\newblock {\em Theor. Comp. Sci.}, 359:378--399, 2006.

\bibitem{Guigo1996}
R.~Guig{\'o}, I.~Muchnik, and T.~F. Smith.
\newblock Reconstruction of ancient molecular phylogeny.
\newblock {\em Mol. Phylogenet. Evol.}, 6:189--213, 1996.

\bibitem{Hahn2007}
M.~W. Hahn.
\newblock Bias in phylogenetic tree reconciliation methods: implications for
  vertebrate genome evolution.
\newblock {\em Genome Biol.}, 8:R141, 2007.

\bibitem{He2006}
Y.~J. He, T.~N. Huynh, J.~Jansson, and W.~K. Sung.
\newblock Inferring phylogenetic relationships avoiding forbidden rooted
  triplets.
\newblock {\em J Bioinform Comput Biol}, 4:59--74, 2006.

\bibitem{HSW:16}
M.~Hellmuth, P.F. Stadler, and N.~Wieseke.
\newblock The mathematics of xenology: Di-cographs, symbolic ultrametrics,
  2-structures and tree-representable systems of binary relations.
\newblock {\em J. Math. Biology}, 2016.
\newblock (in press) DOI: 10.1007/s00285-016-1084-3.

\bibitem{HLS+15}
M.~Hellmuth, N.~Wiesecke, H.P. Lenhof, M.~Middendorf, and P.F. Stadler.
\newblock Phylogenomics with paralogs.
\newblock {\em Proceedings of the National Academy of Sciences (PNAS)},
  112(7):2058--2063, 2015.

\bibitem{HW:15}
M.~Hellmuth and N.~Wieseke.
\newblock On symbolic ultrametrics, cotree representations, and cograph edge
  decompositions and partitions.
\newblock In {\em Computing and Combinatorics: 21st International Conference
  (COCOON), 2015}, pages 609--623, Cham, 2015. Springer International
  Publishing.

\bibitem{HW:16book}
M.~Hellmuth and N.~Wieseke.
\newblock From sequence data incl. orthologs, paralogs, and xenologs to gene
  and species trees.
\newblock In Pontarotti Pierre, editor, {\em Evolutionary Biology}, pages
  373--392, Cham, 2016. Springer International Publishing.

\bibitem{HW:16a}
M.~Hellmuth and N.~Wieseke.
\newblock On tree representations of relations and graphs: Symbolic
  ultrametrics and cograph edge decompositions.
\newblock {\em J. Comb. Opt.}, 2017.
\newblock (in press) DOI 10.1007/s10878-017-0111-7.

\bibitem{Hellmuth:12d}
Marc Hellmuth, Maribel Hernandez-Rosales, Katharina~T. Huber, Vincent Moulton,
  Peter~F. Stadler, and Nicolas Wieseke.
\newblock Orthology relations, symbolic ultrametrics, and cographs.
\newblock {\em J. Math. Biol.}, 2012.
\newblock doi: 10.1007/s00285-012-0525-x.

\bibitem{Jobim:11}
Maribel Hernandez-Rosales, Nicolas Wieseke, Marc Hellmuth, and Peter~F.
  Stadler.
\newblock Simulation of gene family histories.
\newblock Technical Report 12-017, Univ.\ Leipzig, 2011.

\bibitem{Hubbard:07}
T~J Hubbard, B~L Aken, K~Beal, B~Ballester, M~Caccamo, Y~Chen, L~Clarke,
  G~Coates, F~Cunningham, T~Cutts, T~Down, S~C Dyer, S~Fitzgerald,
  J~Fernandez-Banet, S~Graf, S~Haider, M~Hammond, J~Herrero, R~Holland, K~Howe,
  K~Howe, N~Johnson, A~Kahari, D~Keefe, F~Kokocinski, E~Kulesha, D~Lawson,
  I~Longden, C~Melsopp, K~Megy, P~Meidl, B~Ouverdin, A~Parker, A~Prlic, S~Rice,
  D~Rios, M~Schuster, I~Sealy, J~Severin, G~Slater, D~Smedley, G~Spudich,
  S~Trevanion, A~Vilella, J~Vogel, S~White, M~Wood, T~Cox, V~Curwen, R~Durbin,
  X~M Fernandez-Suarez, P~Flicek, A~Kasprzyk, G~Proctor, S~Searle, J~Smith,
  A~Ureta-Vidal, and E~Birney.
\newblock Ensembl 2007.
\newblock {\em Nucleic Acids Res}, 35:D610--617, 2007.

\bibitem{Jansson:05}
J~Jansson, J.~H.-K. Ng, K.~Sadakane, and W.-K. Sung.
\newblock Rooted maximum agreement supertrees.
\newblock {\em Algorithmica}, 43:293--307, 2005.

\bibitem{Jansson2001}
Jesper Jansson.
\newblock On the complexity of inferring rooted evolutionary trees.
\newblock {\em Electronic Notes Discr. Math.}, 7:50--53, 2001.

\bibitem{Jansson:12}
Jesper Jansson, Richard~S. Lemence, and Andrzej Lingas.
\newblock The complexity of inferring a minimally resolved phylogenetic
  supertree.
\newblock {\em SIAM J. Comput.}, 41:272--291, 2012.

\bibitem{ageModel:11}
S.~Keller-Schmidt, M.~Tu{\u{g}}rul, V.~M. Egu{\'{\i}}luz,
  E.~Hern{\'a}ndez-Garc{\'{\i}}i, and K.~Klemm.
\newblock An age dependent branching model for macroevolution.
\newblock Technical Report 1012.3298v1, arXiv, 2010.

\bibitem{Larget2010}
B.~R. Larget, S.~K. Kotha, C.~N. Dewey, and C.~Ane.
\newblock {BUCKy}: gene tree/species tree reconciliation with {Bayesian}
  concordance analysis.
\newblock {\em Bioinformatics}, 26:2910--2911, 2010.

\bibitem{Lechner:11a}
Marcus Lechner, Sven Findei{\ss}, Lydia Steiner, Manja Marz, Peter~F. Stadler,
  and Sonja~J. Prohaska.
\newblock \texttt{Proteinortho:} detection of (co-)orthologs in large-scale
  analysis.
\newblock {\em BMC Bioinformatics}, 12:124, 2011.

\bibitem{Li:06}
H~Li, A~Coghlan, J~Ruan, L~J Coin, J~K H{\'e}rich{\'e}, L~Osmotherly, R~Li,
  T~Liu, Z~Zhang, L~Bolund, G~K Wong, W~Zheng, P~Dehal, J~Wang, and R~Durbin.
\newblock \texttt{TreeFam}: a curated database of phylogenetic trees of animal
  gene families.
\newblock {\em Nucleic Acids Res.}, 34:D572--D580, 2006.

\bibitem{Li:03}
Li~Li, Christian~J Stoeckert, and David~S Roos.
\newblock Orthomcl: identification of ortholog groups for eukaryotic genomes.
\newblock {\em Genome research}, 13(9):2178--2189, 2003.

\bibitem{Ng1996}
Meei~Pyng Ng and Nicholas~C. Wormald.
\newblock Reconstruction of rooted trees from subtrees.
\newblock {\em Discr. Appl. Math.}, 69:19--31, 1996.

\bibitem{Page1997}
R.~D. Page and M.~A. Charleston.
\newblock From gene to organismal phylogeny: reconciled trees and the gene
  tree/species tree problem.
\newblock {\em Mol. Phylogenet. Evol.}, 7:231--240, 1997.

\bibitem{Pryszcz:11}
Leszek~P. Pryszcz, Jaime Huerta-Cepas, and Toni Gabald{\'o}n.
\newblock \texttt{MetaPhOrs}: orthology and paralogy predictions from multiple
  phylogenetic evidence using a consistency-based confidence score.
\newblock {\em Nucleic Acids Res.}, 39:e32, 2011.

\bibitem{Henzinger:99}
Monika Rauch~Henzinger, Valerie King, and Tandy Warnow.
\newblock Constructing a tree from homeomorphic subtrees, with applications to
  computational evolutionary biology.
\newblock {\em Algorithmica}, 24:1--13, 1999.

\bibitem{Semple:03}
Charles Semple.
\newblock Reconstructing minimal rooted trees.
\newblock {\em Discr. Appl. Math}, 127:489--503, 2003.

\bibitem{Semple:book}
Charles Semple and Mike Steel.
\newblock {\em Phylogenetics}, volume~24 of {\em Oxford Lecture Series in
  Mathematics and its Applications}.
\newblock Oxford University Press, Oxford, UK, 2003.

\bibitem{Tatusov:00}
R~L Tatusov, M~Y Galperin, D~A Natale, and E~V Koonin.
\newblock The {COG} database: a tool for genome-scale analysis of protein
  functions and evolution.
\newblock {\em Nucleic Acids Res}, 28:33--36, 2000.

\bibitem{vanIersel:09}
L.~van Iersel, S.~Kelk, and M.~Mnich.
\newblock Uniqueness, intractability and exact algorithms: reflections on
  level-$k$ phylogenetic networks.
\newblock {\em J. Bioinf. Comp. Biol.}, 7:597--623, 2009.

\bibitem{Wheeler:08}
D~L Wheeler, T~Barrett, D~A Benson, S~H Bryant, K~Canese, V~Chetvernin, D~M
  Church, M~Dicuccio, R~Edgar, S~Federhen, M~Feolo, L~Y Geer, W~Helmberg,
  Y~Kapustin, O~Khovayko, D~Landsman, D~J Lipman, T~L Madden, D~R Maglott,
  V~Miller, J~Ostell, K~D Pruitt, G~D Schuler, M~Shumway, E~Sequeira, S~T
  Sherry, K~Sirotkin, A~Souvorov, G~Starchenko, R~L Tatusov, T~A Tatusova,
  L~Wagner, and E~Yaschenko.
\newblock Database resources of the national center for biotechnology
  information.
\newblock {\em Nucleic Acids Res.}, 36:D13--D21, 2008.

\bibitem{Wu2004}
Bang~Ye Wu.
\newblock Constructing the maximum consensus tree from rooted triples.
\newblock {\em J. Comb. Optimization}, 8:29--39, 2004.

\bibitem{Zhang:97}
L.~Zhang.
\newblock On a {M}irkin-{M}uchnik-{S}mith conjecture for comparing molecular
  phylogenies.
\newblock {\em J. Comput. Biol.}, 4:177--187, 1997.

\end{thebibliography}


\end{document}